\newtheorem{theorem}{Theorem}
\newtheorem{definition}{Definition}
\newtheorem{lemma}{Lemma}
\title{One-Bit Matrix Completion with Differential Privacy}
\author[1]{\normalsize Zhengpin Li}
\author[1]{Zheng Wei}
\author[1]{Zengfeng Huang}
\author[2]{Xiaojun Mao}
\author[1]{Jian Wang \thanks{Zhengpin Li and Zheng Wei contributed equally to this work, and Xiaojun Mao and Jian Wang are the co-corresponding authors. (E-mails: maoxj@sjtu.edu.cn, jian\_wang@fudan.edu.cn)}}
\affil[1]{School of Data Science, Fudan University, China }
\affil[2]{School of Mathematical Sciences, Shanghai Jiao Tong University, China}
\date{}
\begin{document}
\maketitle

\begin{abstract}
As a prevailing collaborative filtering method for recommendation systems, one-bit matrix completion requires data collected by users to provide personalized service. Due to insidious attacks and unexpected inference, the release of users' data often raises serious privacy concerns. To address this issue, differential privacy~(DP)~has been widely used in standard matrix completion models. To date, however, little has been known about how to apply DP to achieve privacy protection in one-bit matrix completion. In this paper, we propose a unified framework for ensuring a strong privacy guarantee of one-bit matrix completion with DP. In our framework, we develop four different private perturbation mechanisms corresponding to different stages of one-bit matrix completion. For each mechanism, we design a privacy-preserving algorithm and provide a theoretical recovery error bound under the proper conditions. Numerical experiments on synthetic and real-world datasets demonstrate the effectiveness of our proposal. Compared to the one-bit matrix completion without privacy protection, our proposed mechanisms can maintain high-level privacy protection with marginal loss of completion accuracy.
\end{abstract}

\section{Introduction}
With the dramatically increasing rise of the data volume collected in online markets, recommendation systems have gained a great deal of interest in many realms for their capacity in providing personalized service, precisely and timely~\cite{he2017neural, davidson2010youtube}. In the early stage, recommendation systems focused primarily on leveraging the historical rating records of users to items directly. They made the corresponding recommendations by selecting the most similar objects from the recorded candidates~\cite{sarwar2001item}. However, those methods neglect the underlying interaction information between users and items that is potentially effective and even critical for improving the recommendation performance. To overcome this shortcoming, approaches based on collaborative filtering have been proposed and gradually become mainstream for recommendation tasks~\cite{candes2010matrix, xue2019deep}.
 
As the most commonly used collaborative filtering method, matrix completion has achieved great success in many real-life scenarios~\cite{koren2009matrix}. Matrix completion captures the underlying information between users and items by decomposing the original matrix into two low-rank ones, which represent the users' and items' implicit preferences, respectively. However, in some cases where matrices are comprised of binary values only, standard matrix completion cannot be applied directly. To address this issue, Davenport~{\it et al.}~\cite{davenport20141}~first proposed the one-bit matrix completion and analyzed the recovery error bound. By formulating the problem to a nuclear-norm constrained maximum likelihood form, they solved it through the non-monotonic spectral projected gradient~(SPG) algorithm. There have also been other works focused on one-bit matrix completion problems under various circumstances~\cite{cai2013max,bhaskar20151, ni2016optimal}, which provide useful tools for recommendation systems.

As recommendation systems require the personal preference information of the users, they often raise serious privacy concerns for matrix completion. For example, the users' demographic information, such as the age, gender, and even political orientation, can be inferred from the interaction information of users and items~(e.g., seeing a movie or purchasing a product)~\cite{weinsberg2012blurme}. To tackle this issue, a variety of privacy-preserving methods have been proposed based on the definition of differential privacy~(DP)~\cite{dwork2006calibrating}. DP has emerged as a general standard for measuring the level of privacy protection for matrix completion because it has no requirement of the adversary's background knowledge while enjoying the merits of easy computation and provable privacy guarantees.

Unfortunately, even though one-bit matrix completion works well for completion tasks in general, there have been little research when it comes to privacy protection. Based on the one-bit matrix completion approach presented in~\cite{davenport20141}, we propose in this paper a unified framework for privacy-preserving one-bit matrix completion via differential privacy. Our framework extends the application of differential privacy from the standard matrix completion to one-bit matrix completion. Corresponding to the different steps during the completion process, we design four perturbation mechanisms: i) input, ii) objective, iii) gradient, and iv) output perturbation.

Unlike the standard matrix completion, input perturbation mechanisms for one-bit matrix completion cannot be implemented by adding noises to the entries directly. This is because the binary values only represent the category in essence, rather than the magnitudes. To address this challenge, we design a novel probabilistic approach for the private input mechanism, which can change the original entries to their counterparts with a certain probability. For the other three perturbation mechanisms, we add noises to the underlying preference matrix that consists of continuous-valued entries, thus impacting the observed matrix to achieve privacy protection. We establish a provable privacy-preserving guarantee for each of the four mechanisms. Furthermore, we provide the upper bound of recovery error under appropriate conditions for input, objective, and output perturbation. To the best of our knowledge, we are the first to propose a privacy-preserving one-bit matrix completion via differential privacy. Experimental results indicate that our proposal yields high completion accuracy while ensuring strong privacy protections.

The main contributions of our paper are summarized as follows.

\begin{itemize}
   \item We propose a novel framework for applying differential privacy to one-bit matrix completion. In our framework, four specific perturbation mechanisms are developed corresponding to different stages of the completion process. We provide complete procedures for privacy-preserving algorithms and rigorous guarantees of $\epsilon$-differential privacy for these mechanisms.
   
   \item We establish upper bounds of the recovery error for the proposed mechanisms under certain conditions, which demonstrate the privacy-accuracy trade-offs.

   \item We show the effectiveness of the proposed framework on both synthetic and real-world datasets. Our results show that the proposed privacy-preserving approaches can ensure strong privacy guarantees with a slight loss of completion accuracy.
\end{itemize}

The rest of this paper is organized as follows. We overview related work on one-bit matrix completion and differential privacy in Section~\ref{sec2}, followed by the notations and preliminaries in Section~\ref{secpre}. We introduce our framework and theoretical analysis in Section~\ref{secmethod}. Finally, we present experimental results and conclusion remarks in Sections~\ref{secsim} and~\ref{seccon}, respectively.

\section{Related Work}\label{sec2}
This section introduces the existing research on one-bit matrix completion and differential privacy.

\subsection{One-Bit Matrix Completion}
Matrix completion aims at recovering the unobserved entries of a matrix. It has been used in a variety of practical applications, including collaborative filtering~\cite{koren2009matrix}, sensor localization~\cite{liu2010interior}, and compressed sensing~\cite{candes2006compressive,jacques2013robust}. The theories for matrix completion have been well studied. Cand{\`e}s {\it et al.}~\cite{candes2010power, candes2009exact, candes2010matrix}~established many pioneering theories in this area. Not only did they demonstrate conditions in terms of the number of observations required to recover a low-rank matrix perfectly, but also they showed that matrix completion can achieve accurate performance, even when a few observations have been corrupted by noises. In particular, they rigorously proved that for a matrix of size $n \times n$ and rank $r$, samples in the order of $nr\text{polylog}(n)~$ are required for perfect recovery. Considering a situation where a few observed entries are corrupted with noise, about $nr \log^2(n)$ samples 
can guarantee recovery with an error that is proportional to the noise level. Keshavan {\it et al.}~\cite{keshavan2010matrix}~and Gross~\cite{gross2011recovering}~also established results on the theoretical perspective of low-rank matrices completion. After that, enormous works about the variants of low-rank matrix completion or correlative optimization methods were presented~\cite{negahban2012restricted, gaiffas2017high, koltchinskii2011nuclear,  rohde2011estimation}.

Davenport {\it et al.}~\cite{davenport20141}~first extended the theory of matrix completion to the one-bit case, where the observations of the matrix consist of binary values. Specifically, they proposed a trace-norm constrained maximum likelihood estimator and proved that their proposal minimax rate optimal under the uniform sampling model. Cai and Zhou~\cite{cai2013max}~presented a maximum likelihood estimator under the constraint of max-norm for the case where observations are sampled in a general sampling model. They also proved that the estimation error bound is rate-optimal, as it matches the minimax lower bound established under the general sampling model. Bhaskar and Javanmard~\cite{bhaskar20151}~proposed a maximum likelihood estimator that satisfies an infinite-norm constraint and provided an upper bound for the recoverycompletion error. They also developed an iterative algorithm for solving nonconvex optimization problems and established the proof of global optimality at the limiting point. Ni and Gu~\cite{ni2016optimal}~developed a unified framework for solving one-bit matrix completion with rank constraints under a general random sampling model. The method integrates the statistical error of the estimator and the optimization error of the algorithm, and outperforms the estimators with max-norm and nuclear-norm constraints.

\subsection{Differential Privacy}

There have been many perturbation mechanisms for achieving differential privacy. In general, these mechanisms include four major types: input perturbation, objective perturbation, gradient perturbation, and output perturbation. These perturbation methods protect privacy by adding noises in different stages of data processing. Specifically, Dwork~{\it et al.}~\cite{dwork2006calibrating} first proposed the idea of input perturbation techniques, where the underlying data are randomly modified, and answers to questions are computed using the modified data. Afterward, Friedman~{\it et al.}~\cite{friedman2016differential} proposed to utilize the input perturbation in matrix completion for privacy protection. They assumed that each entry is independent of the rest, and maintained DP by adding noises satisfying specific exponential distribution to the input matrix.

In terms of the objective perturbation, Chaudhuri and Monteleoni~\cite{chaudhuri2008privacy}~developed a privacy-preserving regularized logistic regression algorithm by solving a perturbed optimization problem. Moreover, Chaudhuri~{\it et al.}~\cite{chaudhuri2011differentially}~further proposed a general objective perturbation technique by adding noises to the regularized objective function for empirical risk minimization and established theoretical generalization bounds for this method. With regard to gradient perturbation, Bassily~{\it et al.}~\cite{bassily2014private}~demonstrated the excess risk bound of a noisy gradient descent satisfying differential privacy and developed a faster variant based on stochastic gradient descent~(SGD). Afterward, Wang~{\it et al.}~\cite{wang2017differentially}~proposed a novel gradient perturbation approach satisfying differential privacy with tighter utility upper bound and less running time. As for the output perturbation, Dwork~{\it et al.}~\cite{dwork2006calibrating} first proposed to add noises to outputs and report the noisy versions. Chaudhuri~{\it et al.}~\cite{chaudhuri2011differentially} extended this idea to produce privacy-preserving approximations of classifiers learned via empirical risk minimization. Wu~{\it et al.}~\cite{wu2017bolt} designed a novel differentially private algorithm for SGD based on the output perturbation and provided a rigorous analysis for its higher model accuracy.

\section{Preliminaries}\label{secpre}

This section introduces the notations and preliminaries about one-bit matrix completion and differential privacy throughout the paper.  

\subsection{Notations}
Matrices, vectors and scalars are denoted by boldface capital letters ($\mathbf{X}$), boldface lowercase letters ($\mathbf{x}$) and vanilla lowercase letters ($x$), respectively. $[d] = \{1,2,\cdots, d \}$. $\mathbf{X}_{ij}$ is the ($i,j$)-th entry of $\mathbf{X}$. $\|\mathbf{X}\|_{*}$, $\|\mathbf{X}\|_{\infty}=\max _{i, j}|\mathbf{X}_{ij}|$ and $\|\mathbf{X}\|_{F}=\sqrt{\left(\sum_{i, j}\mathbf{X}_{ij}^2\right)}$ represent the nuclear norm~(sum of the singular values), entry-wise infinity norm and Frobenius norm of $\mathbf{X}$, respectively. For vector $\mathbf{x}$, we denote its $p$-norm by $\|\mathbf{x}\|_p$. Considering a function $f(x):\mathbb{R}\mapsto \mathbb{R}$, the values
	\begin{equation*}
		L_{f,\alpha}\hspace{-.5mm}=\hspace{-.5mm}\sup _{|x| \leq  \alpha} \frac{\left|f^{\prime}(x)\right|}{f(x)(1\hspace{-.5mm}-\hspace{-.5mm}f(x))}~\text{and}~ \beta_{f,\alpha}\hspace{-.5mm}=\hspace{-.5mm}\sup _{|x| \leq  \alpha} \frac{f(x)(1\hspace{-.5mm}-\hspace{-.5mm}f(x))}{\left(f^{\prime}(x)\right)^2}
	\end{equation*}
describe the steepness and flatness of function $f(x)$, respectively. The Hellinger distance for binary scalars $x,y \in \{0,1\}$ is given by
\begin{equation*}
    \operatorname{d}_{H}^{2}(x, y):=(\sqrt{x}-\sqrt{y})^{2}+(\sqrt{1-x}-\sqrt{1-y})^{2}.
\end{equation*}
We allow the Hellinger distance to act on binary matrices $\mathbf{X}, \mathbf{Y} \in \mathbb{R}^{d_1\times d_2}$ via the average Hellinger distance:
\begin{equation*}
    \operatorname{d}_{H}^{2}(\mathbf{X}, \mathbf{Y})=\frac{1}{d_{1} d_{2}} \sum_{i, j} \operatorname{d}_{H}^{2}\left(\mathbf{X}_{ij}, \mathbf{Y}_{ij}\right).
\end{equation*}

\subsection{One-Bit Matrix Completion}\label{3.2}
One-bit matrix completion aims to recover the underlying low-rank matrix based on a limited number of observations. Instead of observing the actual entries from the underlying matrix $\mathbf{M}\in \mathbb{R}^{d_1\times d_2}$, we only have access to the signs of the noisy entries of $\mathbf{M}$. Specifically, given the underlying matrix $\mathbf{M}\in \mathbb{R}^{d_1\times d_2}$ with at most rank-$r$ and a probability function $h(x):\mathbb{R}\mapsto [0,1]$, the entries of $\mathbf{Y}$ depend on $\mathbf{M}$ in the following way,
\begin{equation}\label{generateYY}
\mathbf{Y}_{ij}=\left\{\begin{array}{ll}+1 & \text { w.p. } h\left(\mathbf{M}_{ij}\right) \\ -1 & \text { w.p. } 1-h\left(\mathbf{M}_{ij}\right)\end{array} \quad \text { for }(i,j) \in \Omega\right.,
\end{equation}
where w.p. means "with probability", $\Omega=\{\left(i_{1}, j_{1}\right), \ldots,(i_{n}, j_{n})\}$ represents indices and $n$ denotes the number of $\Omega$. Define $\mathbf{Z}\in\mathbb{R}^{d_1\times d_2}$ as a matrix with entries drawn \emph{i.i.d.} from a distribution whose cumulative distribution function satisfies $\mathbb{P}(\mathbf{Z}_{ij} \leqslant x)=1-h(-x)$. Then, we can rewrite~\eqref{generateYY} as
\begin{equation*}
\mathbf{Y}_{ij}=\left\{\begin{array}{ll}+1 & \text { if } \mathbf{M}_{ij}+\mathbf{Z}_{ij} \ge 0 \\ -1 & \text { if } \mathbf{M}_{ij}+\mathbf{Z}_{ij} \le 0\end{array} \quad \text { for }(i,j) \in \Omega\right..
\end{equation*}

Following~\cite{davenport20141, cai2013max}, we consider two natural choices for $h(x)$ or for variable $\mathbf{Z}_{ij}$. 
\begin{itemize}
	\item Logistic noise model: $h(x)=1 /\left(1+e^{-x} \right)$, which is equivalent to the fact that the noise $\mathbf{Z}_{ij}$ follows the standard logistic distribution.
	
	\item Gaussian noise model: $h(x)=1-\Phi(-x / \sigma)=\Phi(x / \sigma)$ and variance $\sigma>0$ where $\Phi(x)$ is the cumulative distribution function of a standard Gaussian. This is equivalent to that the noise $\mathbf{Z}_{ij}$ follows the standard Gaussian distribution.
\end{itemize}

To obtain an approximate $\widehat{\mathbf{M}}$, we utilize the observations to minimize the negative log-likelihood function with respect to the optimization variable $\mathbf{X}$ under a set of convex constraints. Specifically, this optimization problem can be expressed as 
\begin{equation}\label{OBj}
\underset{\mathbf{X}}{\min }~ \mathcal{L}_{\Omega, \mathbf{Y}}(\mathbf{X}), \quad \text{s.t.} ~~ \mathbf{X}\in \mathcal{C},
\end{equation}
where $\mathcal{C} = \{\mathbf{X}: \|\mathbf{X}\|_{*} \leq  \tau, \|\mathbf{X}\|_{\infty} \leq  \alpha\}$ and   
\begin{equation*}\label{Eq1} 
  \begin{aligned} 
    \mathcal{L}_{\Omega, \mathbf{Y}}\left(\mathbf{X}\right) = -\frac{1}{2}&\sum_{(i,j) \in \Omega}\left\{{\left(1 + \mathbf{Y}_{ij}\right)\log[h(\mathbf{X}_{ij})]}\right.\\
    &\left.{+ \left(1-\mathbf{Y}_{ij}\right)(\log[1-h(\mathbf{X}_{ij})])}\right\}.
  \end{aligned}
\end{equation*}
Motivated by~\cite{davenport20141}, we introduce the constraint $\|\mathbf{X}\|_{*} \leq  \tau$ as the relaxation of the low-rank constraint. The constraint $\|\mathbf{X}\|_{\infty} \leq  \alpha$ can enforce $\mathbf{X}$ to be not too spiky, thus making the problem well-posed~\cite{negahban2012restricted}.

\subsection{Differential Privacy}
This section introduces the definitions related to differential privacy and presents lemmas that will be used in the following sections.
\begin{definition}[$\epsilon$-Differential Privacy]
	A (randomized) algorithm $\mathcal{A} : \mathcal{D} \mapsto \mathbb{R}^{d}$ is said to be $\epsilon$-differential private if for all subsets $S \subseteq \mathbb{R}^{d}$ and for all datasets $\mathbf{X}, \mathbf{X}^{\prime} \in \mathcal{D}$ that differ in one entry at most, it holds that 
	\begin{equation*}
	\mathbb{P}(\mathcal{A}(\mathbf{X}) \in S) \leq  e^{\epsilon} \mathbb{P}\left(\mathcal{A}\left(\mathbf{X}^{\prime}\right) \in S\right).
	\end{equation*}
\end{definition}
The definition illustrates that the output of an $\epsilon$-differential private algorithm $\mathcal{A}$ doesn't change much if we change one entry in a database. In this way, one cannot acquire accurate information by observing the change in the output only. The parameter $\epsilon$ controls the maximum amount of the leaked information, which is called privacy leakage~\cite{jingyu2015differentially}. We now introduce the definitions of Laplace distribution, $L_p$-sensitivity, and a Laplace mechanism that ensures $\epsilon$-differential privacy.
\begin{definition}[Laplace Distribution]
	The probability density function of Laplace distribution with mean $0$ and scale $b$ is
	\begin{equation*}
		p_b(x) = \frac{1}{2b}\exp\left(-\frac{|x|}{b}\right).
	\end{equation*}
\end{definition}

\begin{definition}[$L_p$-Sensitivity]
	The $L_{p}$-sensitivity of an algorithm $\mathcal{A}: \mathcal{D} \mapsto \mathbb{R}^{d}$ is the smallest number $\Delta_p(\mathcal{A})$ such that for all $\mathbf{X}, \mathbf{X}^{\prime} \in \mathcal{D}$ which differ in a single entry,
	\begin{equation*}
	\left\|\mathcal{A}(\mathbf{X})-\mathcal{A}\left(\mathbf{X}^{\prime}\right)\right\|_{p} \leq  \Delta_p(\mathcal{A}).
	\end{equation*}
\end{definition}
\begin{lemma}[Laplace Mechanism~\cite{dwork2006calibrating}]\label{Lap}
	For any algorithm $\mathcal{A}: \mathcal{D} \mapsto \mathbb{R}^{d}$ and $\mathbf{X}\in \mathcal{D}$, let $\{t_i\}_{i=1,2,\cdots,d}$ be {\it i.i.d.} variables drawn from Laplace distribution with mean $0$ and scale $\Delta_1(\mathcal{A})/\epsilon$, the Laplace mechanism $\text{Lap}_\mathcal{A}(\cdot)$ is defined as
	\begin{equation*}
	\operatorname{Lap}_{\mathcal{A}}(\mathbf{X})=\mathcal{A}(\mathbf{X})+\left(t_{1}, \ldots, t_{d}\right)^\top,
	\end{equation*}
	which ensures $\epsilon$-differential privacy.
\end{lemma}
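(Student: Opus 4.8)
The plan is to verify the $\epsilon$-differential privacy condition directly, by comparing pointwise in $\mathbb{R}^d$ the probability densities of $\operatorname{Lap}_{\mathcal{A}}(\mathbf{X})$ and $\operatorname{Lap}_{\mathcal{A}}(\mathbf{X}')$ for an arbitrary pair of datasets $\mathbf{X}, \mathbf{X}' \in \mathcal{D}$ that differ in a single entry, and then integrating the resulting bound over an arbitrary measurable set $S \subseteq \mathbb{R}^d$. Writing $b = \Delta_1(\mathcal{A})/\epsilon$ for the scale parameter and using the independence of the coordinates $t_1,\dots,t_d$, the density of $\operatorname{Lap}_{\mathcal{A}}(\mathbf{X})$ at a point $\mathbf{z} = (z_1,\dots,z_d)^\top$ factorizes as a product of $d$ Laplace densities centered at the coordinates of $\mathcal{A}(\mathbf{X})$.

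First I would form the ratio of the two densities at $\mathbf{z}$. Since the normalizing constants $1/(2b)$ cancel, the ratio reduces to
\[
\exp\!\left(\frac{1}{b}\sum_{i=1}^{d}\bigl(|z_i - \mathcal{A}(\mathbf{X}')_i| - |z_i - \mathcal{A}(\mathbf{X})_i|\bigr)\right).
\]
The crucial step is to bound each summand using the reverse triangle inequality, which gives $|z_i - \mathcal{A}(\mathbf{X}')_i| - |z_i - \mathcal{A}(\mathbf{X})_i| \le |\mathcal{A}(\mathbf{X})_i - \mathcal{A}(\mathbf{X}')_i|$, so that the sum is at most $\|\mathcal{A}(\mathbf{X}) - \mathcal{A}(\mathbf{X}')\|_1$. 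Invoking the definition of $L_1$-sensitivity, this norm is at most $\Delta_1(\mathcal{A})$, whence the density ratio is bounded above by $\exp(\Delta_1(\mathcal{A})/b) = e^\epsilon$, uniformly in $\mathbf{z}$.

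Finally I would integrate the pointwise inequality over $S$ to obtain $\mathbb{P}(\operatorname{Lap}_{\mathcal{A}}(\mathbf{X}) \in S) \le e^\epsilon\,\mathbb{P}(\operatorname{Lap}_{\mathcal{A}}(\mathbf{X}') \in S)$, which is exactly the $\epsilon$-differential privacy condition. The argument is entirely mechanical once the scale is chosen correctly; the only point requiring care — and the step I expect to be the main (if modest) obstacle — is the use of the reverse triangle inequality together with the observation that the per-coordinate differences telescope into the $L_1$ distance between $\mathcal{A}(\mathbf{X})$ and $\mathcal{A}(\mathbf{X}')$, which is precisely what matches the scale $b = \Delta_1(\mathcal{A})/\epsilon$ to the target factor $e^\epsilon$.
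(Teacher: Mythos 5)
Your argument is correct and is exactly the canonical proof of the Laplace mechanism from the cited reference: factor the product of independent Laplace densities, bound the log of the density ratio coordinate-wise by the reverse triangle inequality so it telescopes into $\|\mathcal{A}(\mathbf{X})-\mathcal{A}(\mathbf{X}')\|_1\le\Delta_1(\mathcal{A})$, and integrate over $S$. The paper itself states this lemma without proof (deferring to Dwork et al.), so there is no in-paper argument to diverge from; your proposal fills that gap with the standard derivation and nothing is missing.
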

The Laplace mechanism adds random noises to the numeric output of an algorithm, in which the magnitude of noises follows Laplace distribution with variance $\Delta_1(\mathcal{A})/\epsilon$, where $\Delta_{1}(\mathcal{A})$ represents the global sensitivity of $\mathcal{A}$.  We also rely on the following exponential mechanism, which is based on the $L_2$-sensitivity.

\begin{lemma}[Exponential Mechanism~\cite{wu2017bolt}]\label{exp}
	Let $\mathcal{A}: \mathcal{D} \mapsto \mathbb{R}^{d}$. Then publishing $\mathcal{A}(\mathcal{D})+\mathbf{\kappa}$ preserves $\epsilon$-differential privacy, where $\mathbf{\kappa} \in \mathbb{R}^{d}$ is sampled from the distribution with density
	\begin{equation*}
	p(\mathbf{\kappa}) \propto \exp \left(-\frac{\varepsilon\|\mathbf{\kappa}\|_2}{\Delta_{2}(\mathcal{A})}\right). 
	\end{equation*}
\end{lemma}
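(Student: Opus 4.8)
The plan is to verify the privacy guarantee directly from the definition by comparing the output densities of the mechanism on two neighboring datasets. Write $\Delta := \Delta_2(\mathcal{A})$ and let $M(\mathcal{D}) := \mathcal{A}(\mathcal{D}) + \mathbf{\kappa}$ denote the published quantity. Since $\mathbf{\kappa}$ has density $p(\mathbf{\kappa}) \propto \exp(-\epsilon\|\mathbf{\kappa}\|_2/\Delta)$, the output $M(\mathcal{D})$ is just a translate of $\mathbf{\kappa}$ by the fixed vector $\mathcal{A}(\mathcal{D})$, so its density is
\begin{equation*}
  f_{\mathcal{D}}(z) = C \exp\!\left(-\frac{\epsilon\,\|z - \mathcal{A}(\mathcal{D})\|_2}{\Delta}\right).
\end{equation*}

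First I would establish that the normalizing constant $C = \left(\int_{\mathbb{R}^d}\exp(-\epsilon\|u\|_2/\Delta)\,du\right)^{-1}$ is finite and, crucially, independent of the dataset. Finiteness holds because the integrand decays exponentially in $\|u\|_2$ while the volume of a Euclidean ball grows only polynomially in its radius; independence from $\mathcal{D}$ holds because translating the integration variable leaves an integral over all of $\mathbb{R}^d$ unchanged. This dataset-independence of $C$ is exactly what makes the subsequent density-ratio argument clean.

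Next, for neighboring datasets $\mathcal{D}, \mathcal{D}'$ differing in a single entry, I would bound the pointwise ratio of densities. The constants cancel, leaving
\begin{equation*}
  \frac{f_{\mathcal{D}}(z)}{f_{\mathcal{D}'}(z)} = \exp\!\left(\frac{\epsilon}{\Delta}\Big(\|z-\mathcal{A}(\mathcal{D}')\|_2 - \|z-\mathcal{A}(\mathcal{D})\|_2\Big)\right).
\end{equation*}
By the reverse triangle inequality the bracketed difference is at most $\|\mathcal{A}(\mathcal{D})-\mathcal{A}(\mathcal{D}')\|_2$, which by the definition of $L_2$-sensitivity is at most $\Delta$. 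Hence $f_{\mathcal{D}}(z)/f_{\mathcal{D}'}(z) \le e^{\epsilon}$ for every $z \in \mathbb{R}^d$.

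Finally, integrating this pointwise bound over an arbitrary measurable set $S \subseteq \mathbb{R}^d$ gives $\mathbb{P}(M(\mathcal{D}) \in S) \le e^{\epsilon}\,\mathbb{P}(M(\mathcal{D}') \in S)$, which is precisely the $\epsilon$-differential privacy condition. The main (and essentially only) obstacle is the bookkeeping around the normalizing constant: one must confirm that the density is proper and that $C$ does not depend on $\mathcal{A}(\mathcal{D})$, since otherwise a stray factor $C_{\mathcal{D}}/C_{\mathcal{D}'}$ would survive the cancellation and spoil the $e^{\epsilon}$ bound. Everything else reduces to the reverse triangle inequality combined with the sensitivity hypothesis.
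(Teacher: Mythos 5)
Your proof is correct. The paper itself supplies no proof of this lemma --- it is imported by citation from \cite{wu2017bolt} --- so there is no in-paper argument to compare against; your derivation is the standard one for this mechanism (sometimes called the $K$-norm or Gamma mechanism with the Euclidean norm): translate the noise density, check the normalizer $\left(\int_{\mathbb{R}^d}\exp(-\epsilon\|u\|_2/\Delta)\,du\right)^{-1}$ is finite and dataset-independent, bound the pointwise density ratio via the reverse triangle inequality and the $L_2$-sensitivity, and integrate over the event $S$. You are right that the only point requiring care is the dataset-independence of the normalizing constant, and your treatment of it is sound.
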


\section{Privacy-preserving One-Bit Matrix Completion}\label{secmethod}

This section introduces the proposed framework for privacy-preserving one-bit matrix completion. Corresponding to different stages in one-bit matrix completion, we design four mechanisms i) input, ii) objective iii) gradient, and iv) output perturbation to address the privacy issues. We also provide a theoretical privacy guarantee for each approach. Furthermore, we derive the upper bound of recovery error for the input, objective, and output perturbations and reveal the privacy-accuracy trade-offs.

\subsection{Private Input Perturbation}\label{input}
\begin{algorithm}[t!]
	\caption{Private input perturbation}
	\label{Al1}
	\begin{algorithmic}[1]
		\REQUIRE Incomplete binary matrix $\mathbf{Y}\in \{-1,1\}^{{d_1 \times d_2}}$ ; observation set $\Omega$ ; privacy parameter $\epsilon$.\\

		\STATE Let $\overline{{\mathbf{Y}}}$ be the result of conversion defined in Eq.~\eqref{transform}.
		\STATE Apply SPG to solve Eq.~\eqref{cobj} and obtain $\widehat{\mathbf{M}}$.
		
		\ENSURE Approximate underlying matrix $\widehat{\mathbf{M}}$.
	\end{algorithmic}
\end{algorithm}

In this approach, we exert perturbation on entries of the input matrix $\mathbf{Y}$. Considering that all values in $\mathbf{Y}$ are binary, rather than continuous, we design a perturbation mechanism by changing the original values of entries via a certain probability. Specifically, for each entry $\mathbf{Y}_{ij}= 1(\text{or}\;-1), (i,j) \in \Omega$, it satisfies the following perturbation rules:
\begin{equation}\label{transform}
\overline{\mathbf{Y}}_{ij}=	\left\{
\begin{array}{lll}
-\mathbf{Y}_{ij},&\quad \text{w.p.} &p_1(\operatorname{or}\;p_2)\\
\mathbf{Y}_{ij}, &\quad \text{w.p.} &1-p_1(\operatorname{or}\;1-p_2)
\end{array},\right.
\end{equation}
where $p_1$ and $p_2$ represent the pre-determined transition probabilities. Given $\mathbf{M}$, we can express $\overline{{\mathbf{Y}}}$ by
	\begin{equation*}
		\overline{\mathbf{Y}}_{ij}=\left\{\begin{array}{ll}+1 & \text { w.p. } c\left(\mathbf{M}_{ij}\right) \\ -1 & \text { w.p. } 1-c\left(\mathbf{M}_{ij}\right)\end{array} \right.,
	\end{equation*}
	where $c(x) = h(x)(1-p_1) + [1-h(x)]p_2$ and $h(x)$ is the same as that in Section~\ref{3.2}. To recover $\mathbf{M}$, we solve the following optimization problem: 
	\begin{equation}\label{cobj}
		\underset{\mathbf{X}}{ \min }~ \mathcal{L}_{\Omega, \overline{{\mathbf{Y}}}}(\mathbf{X}), \quad \text{s.t.} ~ \mathbf{X}\in \mathcal{C},
	\end{equation}
where $\mathcal{C} = \{\mathbf{X}: \|\mathbf{X}\|_{*} \leq  \tau$, $\|\mathbf{X}\|_{\infty} \leq  \alpha\}$ and   
\begin{equation*}
  \begin{aligned} 
    \mathcal{L}_{\Omega, \overline{{\mathbf{Y}}}}\left(\mathbf{X}\right) = -\frac{1}{2} & \sum_{(i,j) \in \Omega}\left\{{\left(1 + \overline{\mathbf{Y}}_{ij}\right)\log[c(\mathbf{X}_{ij})]}\right.\\
    &\left.{+ \left(1-\overline{\mathbf{Y}}_{ij}\right)(\log[1-c(\mathbf{X}_{ij})])}\right\}.
  \end{aligned}
\end{equation*}
We solve~\eqref{cobj} via SPG and obtain the estimated matrix $\widehat{\mathbf{M}}$. The summary of this process is shown in Alg.~\ref{Al1}. In addition, we provide theoretical analysis for this perturbation method in Theorem~\ref{Theo1}.

\begin{theorem}\label{Theo1}
 	Given the transition probabilities $p_1$ and $p_2$, Alg.~\ref{Al1} satisfies $\epsilon$-differential privacy if 
	\begin{equation*}
	1 - p_2e^\epsilon \leq  p_1 \leq  (1-p_2)e^\epsilon
	\end{equation*}
	holds. Let $\widehat{\mathbf{M}}$ denote the derived matrix minimizes the optimization problems~\eqref{cobj} and let $\mathbf{M}^\star$ denote the optimal solution. Assuming that $p_1 = p_2 = p$, we have that
	\begin{equation*}
		\epsilon = \log\left(\frac{1-p}{p}\right),
	\end{equation*}
	and with probability at least $1-C_{1} /\left(d_{1}+d_{2}\right)$
	\begin{equation*}
	    \frac{1}{d_{1} d_{2}}\|\widehat{\mathbf{M}}-\mathbf{M}^\star\|_{F}^{2} \leq  C_{c,\alpha} \sqrt{\frac{r\left(d_{1}+d_{2}\right)}{n}} \sqrt{\frac{C_{\Omega,\mathbf{Y}}}{n}},
	\end{equation*}
	where $C_{c,\alpha}=C_{2} \alpha L_{c,\alpha} \beta_{c,\alpha}$, $C_{\Omega,\mathbf{Y}} = \sqrt{n+d_{1} d_{2} \log (d_{1}+d_{2})}$ and $C_1,C_2$ are constants.
\end{theorem}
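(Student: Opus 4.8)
The plan is to treat the two assertions separately: first the $\epsilon$-differential privacy guarantee, which I would obtain from a randomized-response analysis together with the post-processing invariance of DP, and then the recovery bound, which I would derive by recognizing the perturbed data $\overline{\mathbf{Y}}$ as a fresh one-bit completion instance with link function $c$ and invoking the estimation theory for such models.

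For privacy, observe that step 2 of Alg.~\ref{Al1} operates only on $\overline{\mathbf{Y}}$, so by post-processing invariance it suffices to show that the randomized map $\mathbf{Y}\mapsto\overline{\mathbf{Y}}$ defined in \eqref{transform} is $\epsilon$-DP. Two neighbouring inputs differ in a single entry, say at $(i,j)$, and since every coordinate of $\overline{\mathbf{Y}}$ is perturbed independently, the entire privacy loss is carried by that one coordinate. I would therefore bound the likelihood ratios $\mathbb{P}(\overline{\mathbf{Y}}_{ij}=s\mid \mathbf{Y}_{ij}=+1)/\mathbb{P}(\overline{\mathbf{Y}}_{ij}=s\mid \mathbf{Y}_{ij}=-1)$ for $s\in\{+1,-1\}$; these equal $(1-p_1)/p_2$ and $p_1/(1-p_2)$, and requiring both to be at most $e^\epsilon$ rearranges to exactly $1-p_2e^\epsilon\le p_1\le(1-p_2)e^\epsilon$. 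Specializing to $p_1=p_2=p$, the reciprocal ratios coincide with these two, the lower constraint becomes tight, and solving $(1-p)/p=e^\epsilon$ (with $p\le 1/2$) gives $\epsilon=\log((1-p)/p)$.

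For the recovery bound, the key structural fact, already recorded in the setup, is that given $\mathbf{M}$ the perturbed entries satisfy $\mathbb{P}(\overline{\mathbf{Y}}_{ij}=+1)=c(\mathbf{M}_{ij})$ with $c(x)=h(x)(1-p_1)+(1-h(x))p_2=p_2+(1-p_1-p_2)h(x)$. Hence \eqref{cobj} is itself a nuclear-norm constrained one-bit completion problem whose link is $c$ rather than $h$. I would first verify that $c$ is an admissible link: since $c'=(1-p_1-p_2)h'$, the map $c$ is strictly increasing whenever $p_1+p_2<1$, takes values in $[p_2,\,1-p_1]\subset(0,1)$, and therefore inherits finite steepness $L_{c,\alpha}$ and flatness $\beta_{c,\alpha}$ from the corresponding finiteness for $h$; these are precisely the quantities combined in $C_{c,\alpha}=C_2\alpha L_{c,\alpha}\beta_{c,\alpha}$. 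With $c$ validated, I would reproduce the estimation argument of Davenport {\it et al.}~\cite{davenport20141} with $c$ in place of $h$: (i) by optimality of $\widehat{\mathbf{M}}$ and feasibility of $\mathbf{M}^\star$ in $\mathcal{C}$ (guaranteed by $\tau$ of order $\alpha\sqrt{rd_1d_2}$), the excess empirical risk $\mathcal{L}_{\Omega,\overline{\mathbf{Y}}}(\widehat{\mathbf{M}})-\mathcal{L}_{\Omega,\overline{\mathbf{Y}}}(\mathbf{M}^\star)$ is nonpositive; (ii) a restricted strong convexity expansion, whose curvature is lower-bounded through $\beta_{c,\alpha}$, converts this into a lower bound of order $\|\widehat{\mathbf{M}}-\mathbf{M}^\star\|_F^2$; (iii) the cross term is controlled by duality, $\langle\nabla\mathcal{L}_{\Omega,\overline{\mathbf{Y}}}(\mathbf{M}^\star),\,\widehat{\mathbf{M}}-\mathbf{M}^\star\rangle\le \|\nabla\mathcal{L}_{\Omega,\overline{\mathbf{Y}}}(\mathbf{M}^\star)\|_{\mathrm{op}}\,\|\widehat{\mathbf{M}}-\mathbf{M}^\star\|_{*}$, where the nuclear-norm budget gives $\|\widehat{\mathbf{M}}-\mathbf{M}^\star\|_{*}\le 2\tau$ and $L_{c,\alpha}$ bounds the entrywise score. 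Combining the nuclear-norm budget with the high-probability operator-norm bound on the score produces the two square-root factors $\sqrt{r(d_1+d_2)/n}$ and $\sqrt{C_{\Omega,\mathbf{Y}}/n}$ in the final estimate.

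I expect the spectral-norm concentration in step (iii) to be the main obstacle. The score $\nabla\mathcal{L}_{\Omega,\overline{\mathbf{Y}}}(\mathbf{M}^\star)$ is a sum of independent, mean-zero random matrices supported on the sampled coordinates, so its operator norm must be controlled by a matrix Bernstein / non-commutative concentration inequality, carefully tracking the randomness of $\Omega$ and the boundedness supplied by $L_{c,\alpha}$; this is where the term $C_{\Omega,\mathbf{Y}}=\sqrt{n+d_1d_2\log(d_1+d_2)}$ and the failure probability $C_1/(d_1+d_2)$ originate. A secondary point requiring care is that the curvature lower bound in step (ii) is only valid on the constraint set $\mathcal{C}$, where $\|\cdot\|_\infty\le\alpha$ keeps $c(\mathbf{X}_{ij})$ bounded away from $0$ and $1$; one must therefore confirm that both $\widehat{\mathbf{M}}$ and $\mathbf{M}^\star$ lie in $\mathcal{C}$ before invoking it, and track how $L_{c,\alpha}$ and $\beta_{c,\alpha}$ depend on the transition probabilities so that the constant $C_{c,\alpha}$ is finite under $p_1+p_2<1$.
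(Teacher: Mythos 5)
Your privacy argument is exactly the paper's: the same two likelihood ratios $(1-p_1)/p_2$ and $p_1/(1-p_2)$, the same rearrangement to $1-p_2e^\epsilon\le p_1\le(1-p_2)e^\epsilon$, and the same specialization to $p_1=p_2=p\le 1/2$ giving $\epsilon=\log((1-p)/p)$; the paper leaves the post-processing step implicit but otherwise there is no difference. For the recovery bound your top-level strategy also coincides with the paper's --- view $\overline{\mathbf{Y}}$ as a one-bit completion instance with link $c(x)=p+(1-2p)h(x)$ and rerun the analysis of Davenport \emph{et al.} with $c$ in place of $h$ --- but your reconstruction of what that analysis is differs from what the paper (and Davenport \emph{et al.}) actually do. The paper's execution consists of (a) computing explicit bounds $L_{c,\alpha}\le(1-2p)L_{h,\alpha}$ and $\beta_{c,\alpha}\le\tfrac{1}{2(1-2p)^2}+\tfrac12\beta_{h,\alpha}$ from $c'=(1-2p)h'$ and $c(1-c)\ge h(1-h)$, and (b) verifying that $L_{c,\gamma}^{-1}\log(c(x)/c(0))$ and $L_{c,\gamma}^{-1}\log((1-c(x))/(1-c(0)))$ are contractions vanishing at $0$ --- the hypothesis needed for the symmetrization/Rademacher-contraction step in Davenport's Lemma~1 --- after which the bound follows from the uniform deviation bound on $\sup_{\mathbf{X}\in\mathcal{C}}|\overline{\mathcal{L}}(\mathbf{X})-\mathbb{E}\overline{\mathcal{L}}(\mathbf{X})|$, the identity relating the expected excess loss to a KL divergence, and the chain $\mathrm{KL}\ge \operatorname{d}_H^2\ge \|\cdot\|_F^2/(8\beta_{c,\alpha}d_1d_2)$. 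Your sketch instead goes through first-order optimality, a restricted-strong-convexity curvature bound, and a duality/operator-norm bound on the score. That route can also yield a bound of this form, but as written it has a soft spot the KL/Hellinger route avoids: the curvature of the empirical loss only controls the error on the \emph{observed} entries, so passing from $\sum_{(i,j)\in\Omega}(\widehat{\mathbf{M}}_{ij}-\mathbf{M}^\star_{ij})^2$ to $\|\widehat{\mathbf{M}}-\mathbf{M}^\star\|_F^2/(d_1d_2)$ requires a separate RSC-type concentration argument (this is precisely what the spikiness constraint is for in Negahban--Wainwright), which you do not supply. If you instead follow Davenport's actual argument, the only new work is items (a) and (b) above; you should add the contraction verification, since without it Lemma~1 of Davenport \emph{et al.} cannot be invoked for the modified link $c$.
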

\begin{proof}
	The detailed proof is given in Appendix~\ref{proof1}.
\end{proof}

When $p_1 = p_2 = p$, we take the transition probability $p = 1/(1+e^\epsilon) \leq 1/2$ that can satisfy $\epsilon$-differential privacy. This indicates that the input perturbation tends not to change the observation matrix $\mathbf{Y}$. Theorem~\ref{Theo1} demonstrates that a larger transition probability encourages a smaller privacy budget $\epsilon$. Given the perturbed observation matrix $\overline{\mathbf{Y}}$, it also provides the recovery error bound that is proportion to $L_{c,\alpha}$ and $\beta_{c,\alpha}$. It is worth noting that the logistic model satisfies Theorem~\ref{Theo1} with $L_{c,\alpha} \leq  (1-2p)$ and $\beta_{c,\alpha} \leq  \frac{1}{2(1-2p)^2}+\frac{1}{2}e^\alpha$, whereas the Gaussian model has $L_{c, \alpha} \leq  \frac{8(1-2p)(\alpha / \sigma+1)}{\sigma}$ and $\beta_{c,\alpha} \leq  \frac{1}{2(1-2p)^2}+\frac{1}{2}\pi\sigma^{2} e^{\alpha^{2} / 2 \sigma^{2}}$.

\subsection{Private Objective Perturbation}
\begin{algorithm}[t!]
	\caption{Private objective perturbation}
	\label{Al4}
	\begin{algorithmic}[1]
		\REQUIRE Incomplete binary matrix $\mathbf{Y}\in\{-1,1\}^{{d_1 \times d_2}}$; observation set $\Omega$ ; privacy parameter $\epsilon$.
		
		\STATE Sample noise matrix $\mathbf{H}=\{\mathbf{H}_{ij}\}$ where $\mathbf{H}_{ij}$ is generated in the way described in Theorem~\ref{the OP}.
		
		\STATE Generate the perturbed objective function $\mathcal{L}^\prime_{\Omega, \mathbf{Y}}(\mathbf{X})$.
		
		\STATE Apply SPG and obtain $\widehat{\mathbf{M}}$.
		
		\ENSURE Approximate underlying matrix $\widehat{\mathbf{M}}$.
	\end{algorithmic}
\end{algorithm}

The basic idea of private objective perturbation for achieving differential privacy is to add random noises to the objective function. Specifically, we propose the following objective function
\begin{equation}\label{pobj}
		\underset{\mathbf{X}}{\min }~ \mathcal{L}^\prime_{\Omega, \mathbf{Y}}(\mathbf{X}), \quad \text{s.t.} ~ \mathbf{X}\in \mathcal{C},
	\end{equation}
where $\mathcal{C} = \{\mathbf{X}: \|\mathbf{X}\|_{*} \leq  \tau$, $\|\mathbf{X}\|_{\infty} \leq  \alpha\}$ and   
\begin{equation*}
  \begin{aligned} 
    \mathcal{L}^\prime_{\Omega, \mathbf{Y}}\left(\mathbf{X}\right) =& -\frac{1}{2}\sum_{(i,j) \in \Omega}\left\{{\left(1 + \mathbf{Y}_{ij}\right)\log[h(\mathbf{X}_{ij})]}\right.\\
    &\left.{+ \left(1-\mathbf{Y}_{ij}\right)(\log[1-h(\mathbf{X}_{ij})])} + \mathbf{H}_{ij}\mathbf{X}_{ij}\right\},
  \end{aligned}
\end{equation*}
where $\{\mathbf{H}_{ij}\}$ are \emph{i.i.d} random noises drawn from exponential distributions. The parameters of the distributions to ensure $\epsilon$-differential privacy are given in the following theorem. After perturbing the objective function, we apply SPG to solve the problem. The whole process is shown in Alg.~\ref{Al4}.

\begin{theorem}\label{the OP}
	Let $\widehat{\mathbf{M}}$ be the derived matrix minimizing the optimization problems~\eqref{pobj} and let ${\mathbf{M}^\star}$ be the optimal solution. Assume that each noise element $\mathbf{H}_{ij}$ is independently and randomly picked from the density function $p\left(\mathbf{H}_{ij}\right) \propto \exp\left({-\frac{\epsilon|\mathbf{H}_{ij}|}{\Delta}}\right),$ where 
	\begin{align}
	\Delta = \underset{i,j}{\max}~\left[\frac{h^\prime(\widehat{\mathbf{M}}_{ij})}{h(\widehat{\mathbf{M}}_{ij})} + \frac{h^\prime(\widehat{\mathbf{M}}_{ij})}{1-h(\widehat{\mathbf{M}}_{ij})}\right]\nonumber.	
	\end{align}
	Then, Alg.~\ref{Al4} satisfies $\epsilon$-differential privacy. Also, with probability at least $(1-C_3/(d_1+d_2))(1 - 1/\sqrt[3]{n})$,
    \begin{equation*}
        \begin{aligned}
            \frac{1}{d_{1} d_{2}}\|\widehat{\mathbf{M}}-{\mathbf{M}^\star}\|_{F}^{2}\leq C_{h,\alpha}\sqrt{\frac{r\left(d_{1}+d_{2}\right)}{n}} \sqrt{\frac{C_{\Omega,\mathbf{Y}}}{n}}+ \frac{C_{h,\alpha}^\prime}{\epsilon\sqrt[3]{n}},
        \end{aligned}
    \end{equation*}
    where $C_{h,\alpha}=C_{4} \alpha L_{h,\alpha} \beta_{h,\alpha}, C_{h,\alpha}^\prime = 8\sqrt{2}\Delta\beta_{h,\alpha}, C_{\Omega,\mathbf{Y}} = \sqrt{n+d_{1} d_{2} \log (d_{1}+d_{2})}$ and $C_3,C_4$ are constants.
\end{theorem}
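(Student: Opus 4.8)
The plan is to treat the two assertions separately: the $\epsilon$-privacy guarantee via the objective-perturbation change-of-variables, and the recovery bound by splitting the error into a statistical part (handled exactly as in Theorem~\ref{Theo1}) and a privacy part induced by the injected noise. For the privacy claim I would follow the objective-perturbation paradigm and exhibit an explicit bijection between the noise matrix $\mathbf{H}$ and the released minimizer $\widehat{\mathbf{M}}$. Fixing a feasible output $\widehat{\mathbf{M}}\in\mathcal{C}$, the first-order (KKT) optimality conditions for \eqref{pobj} pin down each coordinate: setting the relevant (sub)gradient of $\mathcal{L}'_{\Omega,\mathbf{Y}}$ to zero on $\Omega$ gives $\mathbf{H}_{ij}=-(1+\mathbf{Y}_{ij})\,h'(\widehat{\mathbf{M}}_{ij})/h(\widehat{\mathbf{M}}_{ij})+(1-\mathbf{Y}_{ij})\,h'(\widehat{\mathbf{M}}_{ij})/(1-h(\widehat{\mathbf{M}}_{ij}))$. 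For two databases $\mathbf{Y},\mathbf{Y}'$ differing only in entry $(i_0,j_0)$, the induced noise matrices agree off that coordinate, so in the output-density ratio the noise-density factor telescopes to a single term; flipping $\mathbf{Y}_{i_0j_0}$ perturbs the required noise there by a quantity proportional to $h'/h+h'/(1-h)$ evaluated at $\widehat{\mathbf{M}}_{i_0j_0}$, hence at most a constant multiple of $\Delta$. Feeding this into the density $p(\mathbf{H}_{ij})\propto\exp(-\epsilon|\mathbf{H}_{ij}|/\Delta)$ and using $||a|-|b||\le|a-b|$ caps the ratio at $e^{\epsilon}$, exactly as in the Laplace mechanism of Lemma~\ref{Lap}. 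The delicate point here is the Jacobian factor accompanying the change of variables from $\mathbf{H}$ to $\widehat{\mathbf{M}}$: one must check that the determinant of $\partial\mathbf{H}/\partial\widehat{\mathbf{M}}$ introduces no extra dependence on the swapped entry, which is where monotonicity/log-concavity of $h$ and the entrywise structure of the data term enter.

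For the recovery bound I would write the perturbation as the linear term $P(\mathbf{X})=-\tfrac12\sum_{(i,j)\in\Omega}\mathbf{H}_{ij}\mathbf{X}_{ij}$, so that $\widehat{\mathbf{M}}$ is the constrained minimizer of $\mathcal{L}_{\Omega,\mathbf{Y}}+P$ over $\mathcal{C}$. Combining the variational inequality at $\widehat{\mathbf{M}}$ with the local strong convexity of the log-likelihood—where the flatness bound $\beta_{h,\alpha}$ supplies the curvature constant, just as in the proof of Theorem~\ref{Theo1}—yields a bound of the schematic form $\tfrac{1}{\beta_{h,\alpha}\,d_1d_2}\|\widehat{\mathbf{M}}-\mathbf{M}^\star\|_F^2\lesssim|\langle\nabla\mathcal{L}_{\Omega,\mathbf{Y}}(\mathbf{M}^\star),\widehat{\mathbf{M}}-\mathbf{M}^\star\rangle|+|\langle\nabla P,\widehat{\mathbf{M}}-\mathbf{M}^\star\rangle|$. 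The first inner product is controlled precisely as in the non-private analysis: pairing the operator norm of the random gradient $\nabla\mathcal{L}_{\Omega,\mathbf{Y}}(\mathbf{M}^\star)$—a sum of independent bounded terms, concentrated via a matrix Bernstein / noncommutative Khintchine estimate carrying the steepness factor $L_{h,\alpha}$—against $\|\widehat{\mathbf{M}}-\mathbf{M}^\star\|_*\le 2\tau$ reproduces the summand $C_{h,\alpha}\sqrt{r(d_1+d_2)/n}\sqrt{C_{\Omega,\mathbf{Y}}/n}$ together with the probability $1-C_3/(d_1+d_2)$ inherited from Theorem~\ref{Theo1}.

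The remaining task, and the main obstacle on the utility side, is to bound the data-dependent term $\langle\nabla P,\widehat{\mathbf{M}}-\mathbf{M}^\star\rangle=\tfrac12\langle\mathbf{H}_\Omega,\widehat{\mathbf{M}}-\mathbf{M}^\star\rangle$, where $\mathbf{H}_\Omega$ is $\mathbf{H}$ restricted to $\Omega$. Because $\widehat{\mathbf{M}}$ itself depends on $\mathbf{H}$, one cannot take a naive expectation; instead I would use a norm-duality bound (operator norm of $\mathbf{H}_\Omega$ against $\|\widehat{\mathbf{M}}-\mathbf{M}^\star\|_*\le 2\tau$, or Cauchy--Schwarz restricted to $\Omega$) and then control the magnitude of the Laplace-type noise. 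Since each $\mathbf{H}_{ij}$ has scale $\Delta/\epsilon$ and $|\Omega|=n$, a Markov-type tail bound on $\|\mathbf{H}_\Omega\|$—trading noise magnitude against failure probability—gives a bound holding with probability $1-1/\sqrt[3]{n}$ and scaling like $\Delta\beta_{h,\alpha}/(\epsilon\sqrt[3]{n})$, producing the second summand $C_{h,\alpha}'/(\epsilon\sqrt[3]{n})$ with $C_{h,\alpha}'=8\sqrt2\,\Delta\beta_{h,\alpha}$; the exponent $1/3$ is exactly the cost of this trade-off. Substituting both estimates into the resulting quadratic inequality in $\|\widehat{\mathbf{M}}-\mathbf{M}^\star\|_F$ and intersecting the two favorable events yields the claim with probability at least $(1-C_3/(d_1+d_2))(1-1/\sqrt[3]{n})$. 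I expect the genuinely hard steps to be (i) the Jacobian accounting in the privacy proof and (ii) the uniform control of the noise--estimator cross term, since the dependence of $\widehat{\mathbf{M}}$ on $\mathbf{H}$ rules out a direct expectation argument.
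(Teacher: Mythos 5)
Your proposal follows essentially the same route as the paper's proof: the privacy guarantee is obtained exactly as you describe, by using the first-order optimality condition to identify the noise entries forced by a given output, observing that neighboring databases change only the $(p,q)$ noise entry by at most $\Delta$, and bounding the resulting density ratio by $e^{\epsilon}$; the utility bound likewise decomposes into the Davenport-et-al.\ statistical term (their Lemma~1 plus the Hellinger-to-Frobenius conversion via $\beta_{h,\alpha}$, which is the strong-convexity step you invoke) and a noise cross term $\frac{1}{n}\sum_{(i,j)\in\Omega}\mathbf{H}_{ij}(\mathbf{X}_{ij}-\widehat{\mathbf{M}}_{ij})$ controlled by an $\ell_\infty$ bound and a Chebyshev tail estimate at scale $a=\sqrt{2}\Delta/(\epsilon n^{1/3})$, exactly the second-moment trade-off you anticipate. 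The Jacobian issue you flag in the privacy argument is real but is not addressed in the paper either (it simply takes the ratio of noise densities), so your proposal is, if anything, more careful than the printed proof on that point.
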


\begin{proof}
	The detailed proof is shown in Appendix~\ref{proof2}.
\end{proof}
 
 Theorem~\ref{the OP} indicates that the distributions of noises are relevant to the solution of problem~\eqref{pobj}. We can observe that the value of $\Delta$ depends on $\widehat{\mathbf{M}}$, which cannot be achieved in advance. To address this issue, we recall that the solution satisfies 
\begin{equation*}
	\widehat{\mathbf{M}} \in  \{\mathbf{X} : \|\mathbf{X}\|_{*} \leq  \tau, \|\mathbf{X}\|_{\infty} \leq  \alpha\},
\end{equation*}
which means all entries of $\widehat{\mathbf{M}}$ are restricted to the range $[-\alpha, \alpha]$. Thus we can bound $\Delta \leq  2h^\prime(0)/h(-\alpha)$ for Gaussian noise model and $\Delta \leq  1$ for the logistic noise model.

\subsection{Private Gradient Perturbation}

\begin{algorithm}[t!]
	\caption{Private gradient perturbation}
	\label{Al2}
	\begin{algorithmic}[1]
		\REQUIRE Incomplete binary matrix $\mathbf{Y}\in\{-1,1\}^{{d_1 \times d_2}}$; observation set $\Omega$ ; maximum iteration number $K$; spectral step length $\gamma_k, i = 1,2,\cdots,K$; step length $\alpha$, privacy parameter $\epsilon$.\\
		
		\FOR{$K$ iterations}
			\STATE Sample noise matrix $\mathbf{H}=\{ \mathbf{H}_{ij}\}$ where $ \mathbf{H}_{ij}$ is drawn \emph{i.i.d.} from Laplacian distribution with mean $0$ and scale $K\Delta_1(\nabla f(\mathbf{x}))/\epsilon$.
			\STATE $\nabla f(\mathbf{x}_{(k)}) \leftarrow \nabla f(\mathbf{x}_{(k)}) +  \mathcal{P}_\Omega(\mathcal{V}(\mathbf{H}))$.
			\STATE $	\mathbf{x}_{(k+1)}=\mathcal{P}_{\mathcal{C}^\prime}\left(\mathbf{x}_{(k)}-\alpha\gamma_{k} \nabla f(\mathbf{x}_{(k)})\right)$.
		\ENDFOR
		
		\ENSURE Approximate underlying matrix $\widehat{\mathbf{M}}$.
	\end{algorithmic}
\end{algorithm}

Back to problem~\eqref{OBj}, we rewrite it as follows
\begin{equation}\label{objvec}
    \underset{\mathbf{x}}{\min }~ f(\mathbf{x}) \quad \text{s.t.} \quad \mathbf{x} \in \mathcal{C}^\prime,
\end{equation}
where $f(\mathbf{x})$ is a smooth convex function from $\mathbb{R}^{d_1d_2}$ \text{to} $\mathbb{R}$, and $\mathcal{C}^{\prime}$ is a closed convex set in $\mathbb{R}^{d_1d_2}.$ In particular, defining $\mathcal{V}$ to be the bijective linear mapping that vectorizes $\mathbb{R}^{d_{1} \times d_{2}}$ to $\mathbb{R}^{d_{1} d_{2}}$, we have $f(\mathbf{x})=\mathcal{L}_{\Omega, \mathbf{Y}}\left(\mathcal{V}^{-1} \mathbf{x}\right)$, and $\mathcal{C}^\prime$ equals to $\mathcal{V}\left(\mathcal{C}\right)$. We then apply SPG to solve~\eqref{objvec}. Specifically, in the $k$-th iteration, 
\begin{equation*}
	\mathbf{x}_{(k+1)}=\mathcal{P}_{\mathcal{C}^\prime}\left(\mathbf{x}_{(k)}-\alpha\gamma_{k} \nabla f(\mathbf{x}_{(k)})\right),
\end{equation*}
where $\mathbf{x}_{(k)}$ denotes the solution after $k$ iterations, $\gamma_{k}$ denotes the step length calculated according to~\cite{barzilai1988two}, and $\mathcal{P}_{\mathcal{C}^\prime}$ denotes the projection operation as
\begin{equation*}
	\mathcal{P}_{\mathcal{C}^\prime}(\mathbf{v})=  \underset{\mathbf{x}}{\arg \min}~\|\mathbf{x}-\mathbf{v}\|_{2}, \quad \text { s.t.} \quad \mathbf{x} \in \mathcal{C}^\prime.
\end{equation*}

In this approach, we add noises to the gradient of SPG to ensure privacy protections. Specifically, in the $k$-th iteration, $\nabla f(\mathbf{x}_{(k)})$ is perturbed as
\begin{equation*}
	\nabla f^\prime(\mathbf{x}_{(k)}) = \nabla f(\mathbf{x}_{(k)}) + \mathcal{P}_\Omega(\mathcal{V}(\mathbf{H})),
\end{equation*}
where $\mathbf{H}$ denotes the noise matrix. To maintain $\epsilon$-differential privacy, its entries are drawn from Laplacian distribution with mean $0$ and scale $K\Delta_1(\nabla f(\mathbf{x}))/\epsilon$, where $K$ denotes maximum iteration number. Alg.~\ref{Al2} summarizes this process. 
 
\begin{theorem}\label{the GP}
	Alg.~\ref{Al2} satisfies $\epsilon$-differential privacy. 
\end{theorem}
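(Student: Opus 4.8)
The plan is to decompose the privacy guarantee across the $K$ iterations and assemble it via sequential composition. First I would isolate the only data-dependent randomized release: at the $k$-th iteration the algorithm publishes the perturbed gradient $\nabla f^\prime(\mathbf{x}_{(k)}) = \nabla f(\mathbf{x}_{(k)}) + \mathcal{P}_\Omega(\mathcal{V}(\mathbf{H}))$, and the subsequent update $\mathbf{x}_{(k+1)} = \mathcal{P}_{\mathcal{C}^\prime}(\mathbf{x}_{(k)} - \alpha\gamma_k \nabla f^\prime(\mathbf{x}_{(k)}))$ is a deterministic function of $\nabla f^\prime(\mathbf{x}_{(k)})$ and $\mathbf{x}_{(k)}$. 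Hence the update is a post-processing step, which by the post-processing immunity of differential privacy cannot weaken whatever guarantee holds for the noisy gradient it consumes.

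Second, I would analyze one iteration in isolation. Because $\mathcal{L}_{\Omega,\mathbf{Y}}$ sums only over observed indices, $\nabla f(\mathbf{x}_{(k)})$ is supported on $\Omega$, and changing one entry of $\mathbf{Y}$ perturbs at most the single corresponding coordinate of the gradient; I write $\Delta_1(\nabla f(\mathbf{x}))$ for the resulting $L_1$-sensitivity. The mechanism adds i.i.d.\ Laplace noise of scale $K\Delta_1(\nabla f(\mathbf{x}))/\epsilon$ to the observed coordinates, and since $K\Delta_1/\epsilon = \Delta_1/(\epsilon/K)$, Lemma~\ref{Lap} with privacy parameter $\epsilon/K$ shows that releasing $\nabla f^\prime(\mathbf{x}_{(k)})$ is $(\epsilon/K)$-differentially private. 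Combined with the post-processing observation, the full iteration that maps $\mathbf{x}_{(k)}$ to $\mathbf{x}_{(k+1)}$ is $(\epsilon/K)$-differentially private.

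Finally, I would stitch the per-iteration guarantees together. Since $\mathbf{x}_{(k)}$ depends on all earlier noisy gradients, this is an adaptive composition, but the basic composition theorem for differential privacy still applies: the adaptive composition of $K$ mechanisms each satisfying $(\epsilon/K)$-DP satisfies $(K\cdot \epsilon/K)=\epsilon$-DP. The returned $\widehat{\mathbf{M}}=\mathcal{V}^{-1}(\mathbf{x}_{(K)})$ is one further post-processing of the composed transcript, so Alg.~\ref{Al2} is $\epsilon$-differentially private.

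The main obstacle will be establishing the $L_1$-sensitivity $\Delta_1(\nabla f(\mathbf{x}))$ as a finite quantity uniform in the iterate $\mathbf{x}$. Each gradient coordinate at an observed entry involves the ratios $h^\prime/h$ and $h^\prime/(1-h)$ evaluated at $\mathbf{X}_{ij}$, so a clean uniform bound requires exploiting the constraint $\|\mathbf{X}\|_\infty \le \alpha$ enforced by $\mathcal{C}^\prime$ to control these ratios on $[-\alpha,\alpha]$, exactly as in the objective-perturbation analysis. Once this uniform bound is secured, the Laplace-mechanism and composition steps are routine.
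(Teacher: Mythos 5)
Your proof is correct and follows essentially the same route as the paper: the Laplace mechanism gives $\epsilon/K$-differential privacy per iteration, and sequential composition over the $K$ iterations yields the overall $\epsilon$ guarantee. Your additional remarks on post-processing immunity of the projection step and on the need for a uniform bound on $\Delta_1(\nabla f(\mathbf{x}))$ (which the paper only addresses implicitly, via gradient clamping in practice) are refinements of, not departures from, the paper's argument.
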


\begin{proof}
	The detailed proof is shown in Appendix~\ref{proof3}.
\end{proof}

The private gradient perturbation in Alg.~\ref{Al2} achieves privacy preservation by adding noises to the gradient of the SPG algorithm. The Laplace mechanism ensures $\epsilon/K$-differential privacy for each iteration. As talked in proof, the sequential composition property ensures that the $K$ iterations maintain the overall $\epsilon$-differential privacy. In practice, we clamp each entry of the gradient to limit the effect of noise. We denote the clamping parameter by $C$~($0.5$ in our experiments) and set it as follows:
\begin{equation*}
    x = \left\{\begin{array}{cc}
         C, & \quad x > C  \\
         x, & \quad -C < x \leq C \\
         -C, & \quad x \leq -C
    \end{array} \right..
\end{equation*}
The clamping reduces the $L_1$-sensitivity of the computations conducted during the SPG process, and therefore results in lower magnitudes of noise being introduced to the differentially private computation.

\subsection{Private Output Perturbation}
In this approach, we still focus on the problem~\eqref{OBj}, which is solved through SPG. The output perturbation maintains a privacy guarantee by adding noises to the output estimation of SPG. The outline of this method is illustrated in Alg.~\ref{Al3}. We note that the distribution of noises is relevant to the $L_1$-sensitivity of the approximate matrix $\widehat{\mathbf{M}}$. Its $L_1$-sensitivity is bounded by the infinite norm constraint, which results in known magnitudes of noises in differentially private computation. 

\begin{algorithm}[t!]
	\caption{Private output perturbation}
	\label{Al3}
	\begin{algorithmic}[1]
		\REQUIRE Incomplete binary matrix $\mathbf{Y}\in\{-1,1\}^{{d_1 \times d_2}}$; observation set $\Omega$ ; privacy parameter $\epsilon$; infinity norm constraint $\alpha$.\\
		
		\STATE Apply SPG to solve Eq.~\eqref{OBj} and obtain $\widehat{\mathbf{M}}$.
		\STATE Sample noise matrix $\mathbf{H}=\{\mathbf{H}_{ij}\}$ where $\mathbf{H}_{ij}$ is \emph{i.i.d} random variable drawn from Laplace distribution with mean $0$ and scale $\frac{2\alpha}{\epsilon}$.
		\STATE $\widehat{\mathbf{M}} \leftarrow \widehat{\mathbf{M}} + \mathbf{H}$.
		
		\ENSURE Approximate underlying matrix $\widehat{\mathbf{M}}$.
	\end{algorithmic}
\end{algorithm}

\begin{theorem}\label{out}
	Alg.~\ref{Al3} maintains $\epsilon$-differential privacy. Furthermore, let $\widehat{\mathbf{M}}$ be the derived matrix of Alg.~\ref{Al3} and let ${\mathbf{M}^\star}$ be the optimal solution of problem~\eqref{OBj}. With the same constants $C_{\Omega,\mathbf{Y}}$ and $C_{f,\alpha}$ in Theorem \ref{Theo1}, we have that with probability at least $1-C_5/ (d_1+d_2)$,
    \begin{equation*}
        \begin{aligned}
        \frac{1}{d_1d_2}\|\widehat{\mathbf{M}}-\mathbf{M}^\star\|_F^2 \leq & C_{f,\alpha}\sqrt{\frac{r\left(d_{1}+d_{2}\right)}{n}} \sqrt{\frac{C_{\Omega,\mathbf{Y}}}{n}}\\
        &+ \frac{C_6\alpha^2\log^2(d_1+d_2) }{\epsilon^2},
        \end{aligned}
	\end{equation*}
	where $C_5, C_6$ are positive constants.
\end{theorem}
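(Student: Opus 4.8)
The plan is to treat the two assertions separately, since privacy and accuracy rely on disjoint arguments. For the privacy guarantee, I would first establish the $L_1$-sensitivity of the noise-free SPG solution map $\mathbf{Y} \mapsto \widehat{\mathbf{M}}$. The essential observation is that every feasible point obeys $\|\mathbf{X}\|_\infty \le \alpha$, so both solutions produced from neighboring inputs lie in $[-\alpha,\alpha]$ entrywise; this is precisely what caps the deviation of the released matrix and yields a sensitivity controlled by $2\alpha$. Once the sensitivity bound $\Delta_1 \le 2\alpha$ is in hand, the $\epsilon$-differential privacy of Alg.~\ref{Al3} follows immediately from the Laplace Mechanism (Lemma~\ref{Lap}) with scale $2\alpha/\epsilon$. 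I expect the delicate point here to be justifying that the infinity-norm constraint really bounds the sensitivity of the output map in the sense required by Lemma~\ref{Lap}, rather than merely bounding each individual entry in isolation.

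For the accuracy bound, write $\widehat{\mathbf{M}} = \widetilde{\mathbf{M}} + \mathbf{H}$, where $\widetilde{\mathbf{M}}$ is the noise-free minimizer of~\eqref{OBj} and $\mathbf{H}$ is the added Laplace matrix. The plan is to split the error by the elementary inequality $\|\widehat{\mathbf{M}} - \mathbf{M}^\star\|_F^2 \le 2\|\widetilde{\mathbf{M}} - \mathbf{M}^\star\|_F^2 + 2\|\mathbf{H}\|_F^2$ and bound the two pieces independently. The first piece is exactly the non-private one-bit completion error, so it is controlled by the recovery bound already established in Theorem~\ref{Theo1} specialized to the unperturbed model (i.e.\ $c = h$), which supplies the term $C_{f,\alpha}\sqrt{r(d_1+d_2)/n}\,\sqrt{C_{\Omega,\mathbf{Y}}/n}$ together with a failure probability of order $1/(d_1+d_2)$.

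The remaining work is the concentration of the noise energy $\|\mathbf{H}\|_F^2$. I would bound it crudely but sufficiently by $\|\mathbf{H}\|_F^2 \le d_1 d_2 \max_{i,j}\mathbf{H}_{ij}^2$ and then control the maximum of the $d_1 d_2$ i.i.d.\ $\mathrm{Laplace}(0, 2\alpha/\epsilon)$ variables through the tail bound $\mathbb{P}(|\mathbf{H}_{ij}| > t) = e^{-\epsilon t/(2\alpha)}$ combined with a union bound. Choosing $t$ of order $(2\alpha/\epsilon)\log(d_1+d_2)$ makes the failure probability $O(1/(d_1+d_2))$ and gives $\frac{1}{d_1 d_2}\|\mathbf{H}\|_F^2 \le \max_{i,j}\mathbf{H}_{ij}^2 = O(\alpha^2 \log^2(d_1+d_2)/\epsilon^2)$, which is exactly the second additive term with constant $C_6$. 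Merging the two high-probability events by a final union bound yields the stated inequality. The main obstacle is not either bound in isolation but getting the logarithmic powers right: the $\log^2$ factor must come from squaring the maximum-of-Laplace estimate, and one must verify that the union-bound exponent is taken large enough (roughly cubic in the entry count relative to $d_1+d_2$) so that the overall failure probability stays at the advertised $C_5/(d_1+d_2)$ level.
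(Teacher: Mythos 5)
Your proposal is correct and follows essentially the same route as the paper: sensitivity $2\alpha$ plus the Laplace mechanism for privacy, then a decomposition of the error into the noise-free recovery error (handled by the Davenport \emph{et al.} bound) plus the noise energy $\|\mathbf{H}\|_F^2$, which is controlled through the maximum of the $d_1 d_2$ i.i.d.\ Laplace entries via a tail bound and a union bound with $t \asymp \log(d_1+d_2)$. If anything, your version is slightly more careful than the paper's: you keep the factor of $2$ in the squared-norm decomposition and correctly reduce the averaged sum to the \emph{maximum} entry, whereas the paper's display writes $\min$ at that step before nevertheless computing the union bound appropriate to the maximum.
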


\begin{proof}
	The detailed proof is left to Appendix~\ref{proof4}.
\end{proof}

Theorem~\ref{out} provides an $\epsilon$-differential privacy guarantee and the recovery bound of the proposed algorithm. To ensure the convergence of the error bound, it's worth noting that we require $\log(d_1+d_2) = \operatorname{o}(\epsilon)$. 

\begin{figure*}[t!]
	\subfigure[$\epsilon$-differential privacy with Gaussian noise]{
		\begin{minipage}[t]{0.42\linewidth}			
			\includegraphics[width=1\linewidth]{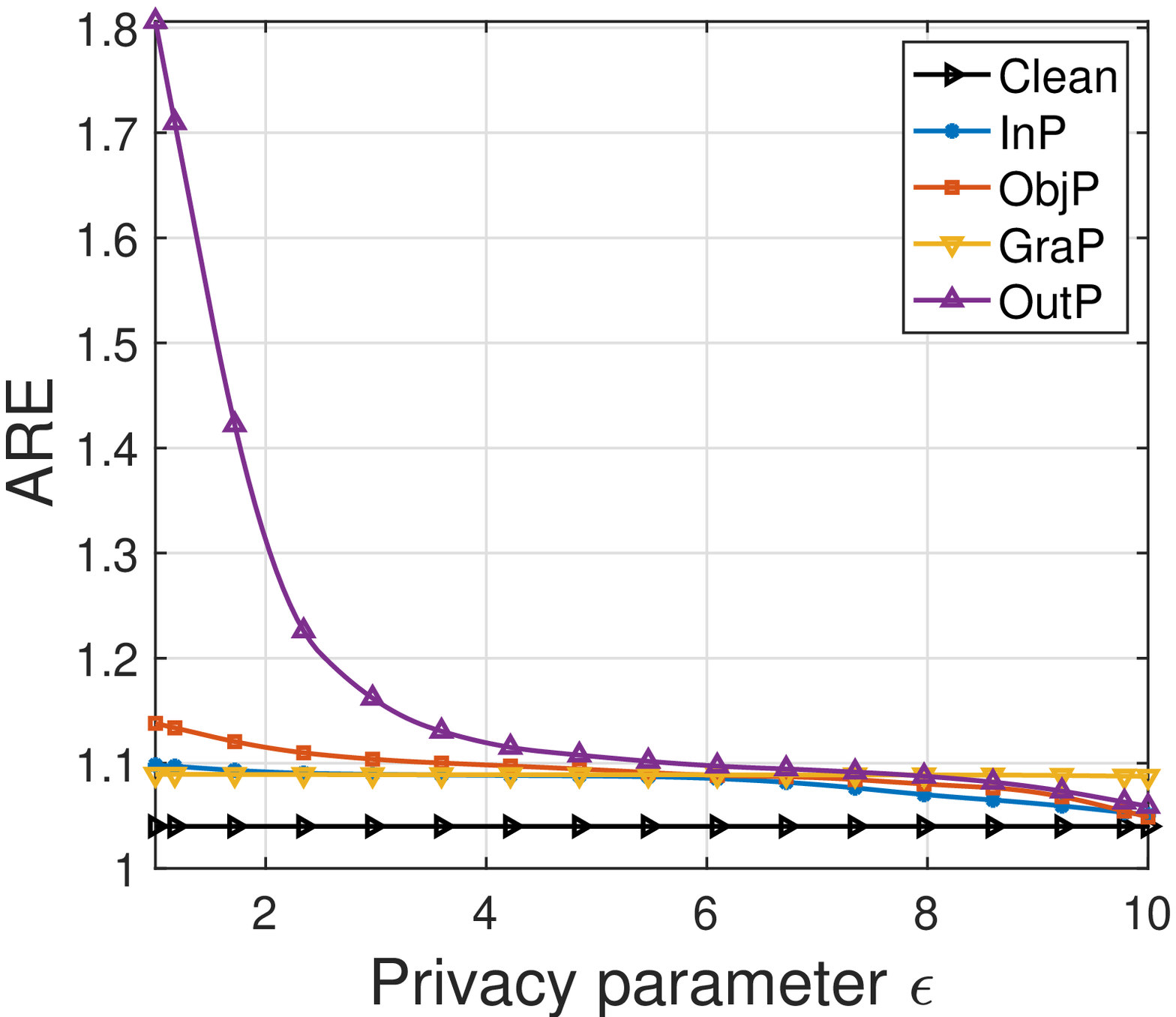}
		\end{minipage}
	}
	\subfigure[$\epsilon$-differential privacy with Logistic noise]{
		\begin{minipage}[t]{0.42\linewidth}
				\includegraphics[width=1\linewidth]{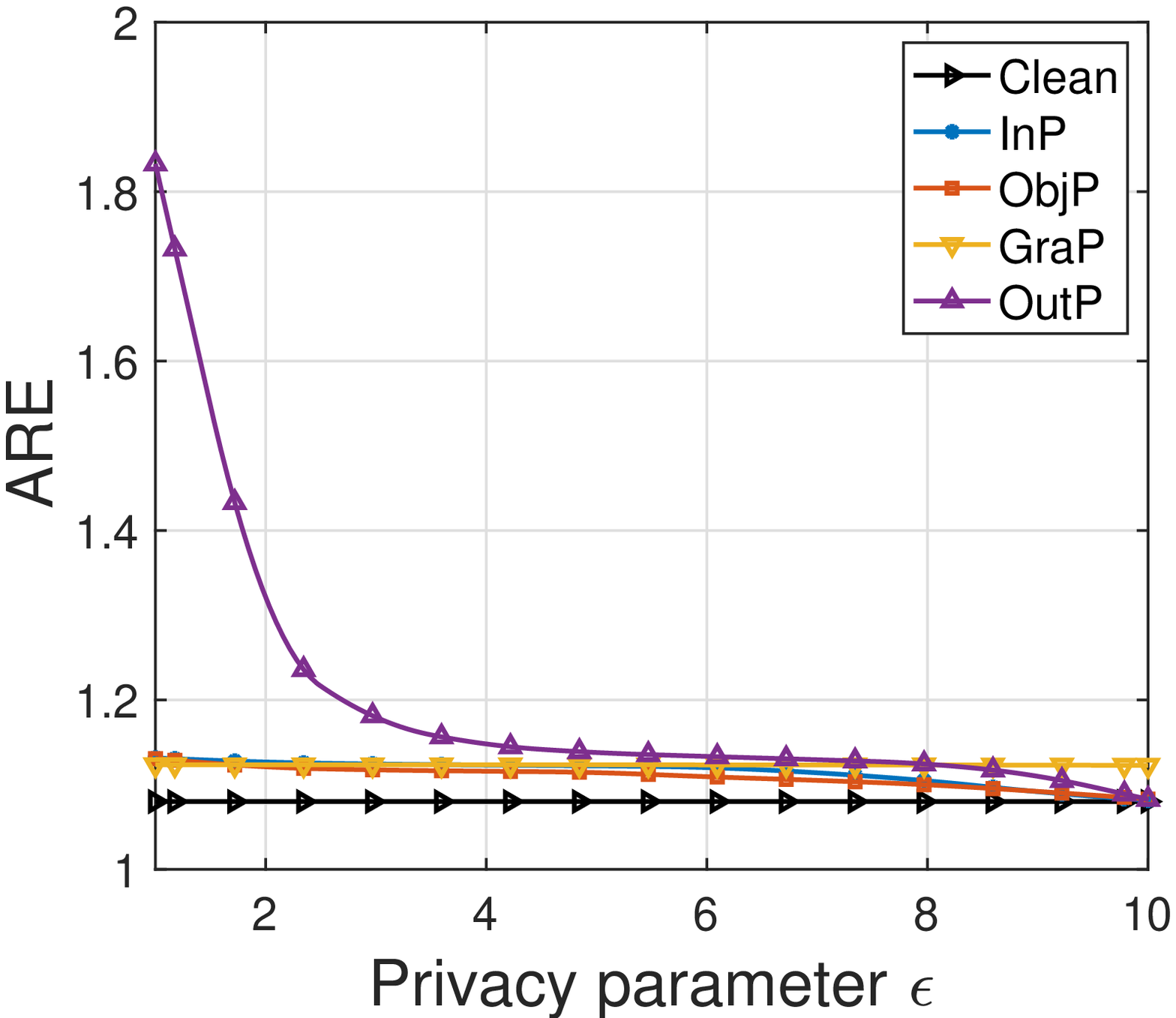}
		\end{minipage}%
	}
	\centering
	\caption{Recovery performance on synthetic data with different privacy parameter $\epsilon$}
	\label{Fig1}
\end{figure*}

\section{Experiments}\label{secsim}

In this section, we evaluate the performance of our proposed framework of privacy-preserving one-bit matrix completion. Section~\ref{Setting} describes the datasets and setting used in the following experiments. In Section~\ref{5.2}, we show the recovery performance on all the datasets. And Section~\ref{5.3} discusses the effect of observation ratios.

\subsection{Datasets and Experimental Settings}\label{Setting}
The overview of datasets is shown as follows. We conduct experiments on both synthetic data and real-world datasets. The latter are summarized in Table~\ref{TB1}.

\begin{figure*}[t!]
	\subfigure[$\epsilon$-differential privacy with Gaussian noise]{
		\begin{minipage}[t]{0.42\linewidth}			
			\includegraphics[width=1\linewidth]{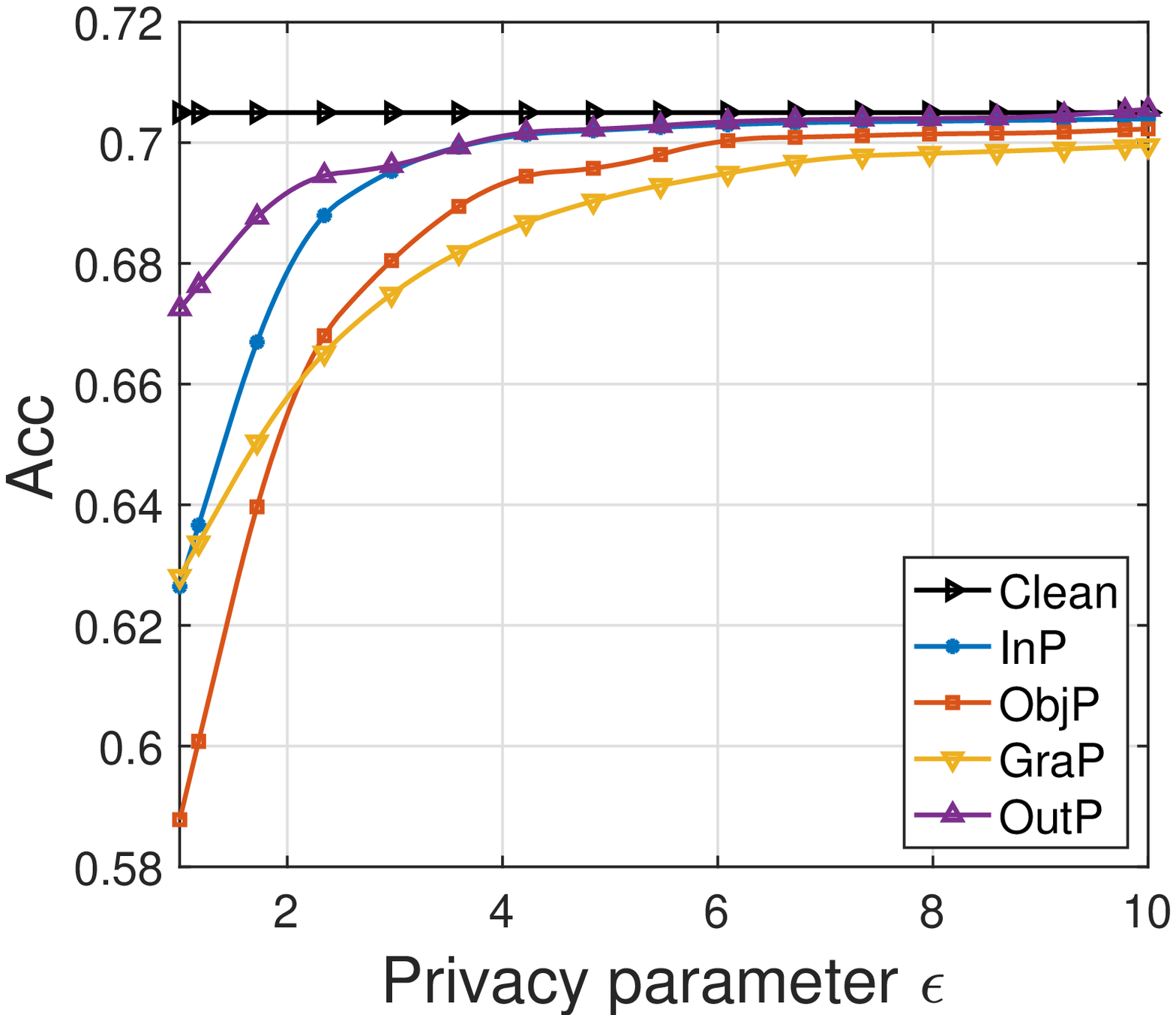}
		\end{minipage}
	}
	\subfigure[$\epsilon$-differential privacy with Logistic noise]{
		\begin{minipage}[t]{0.42\linewidth}
				\includegraphics[width=1\linewidth]{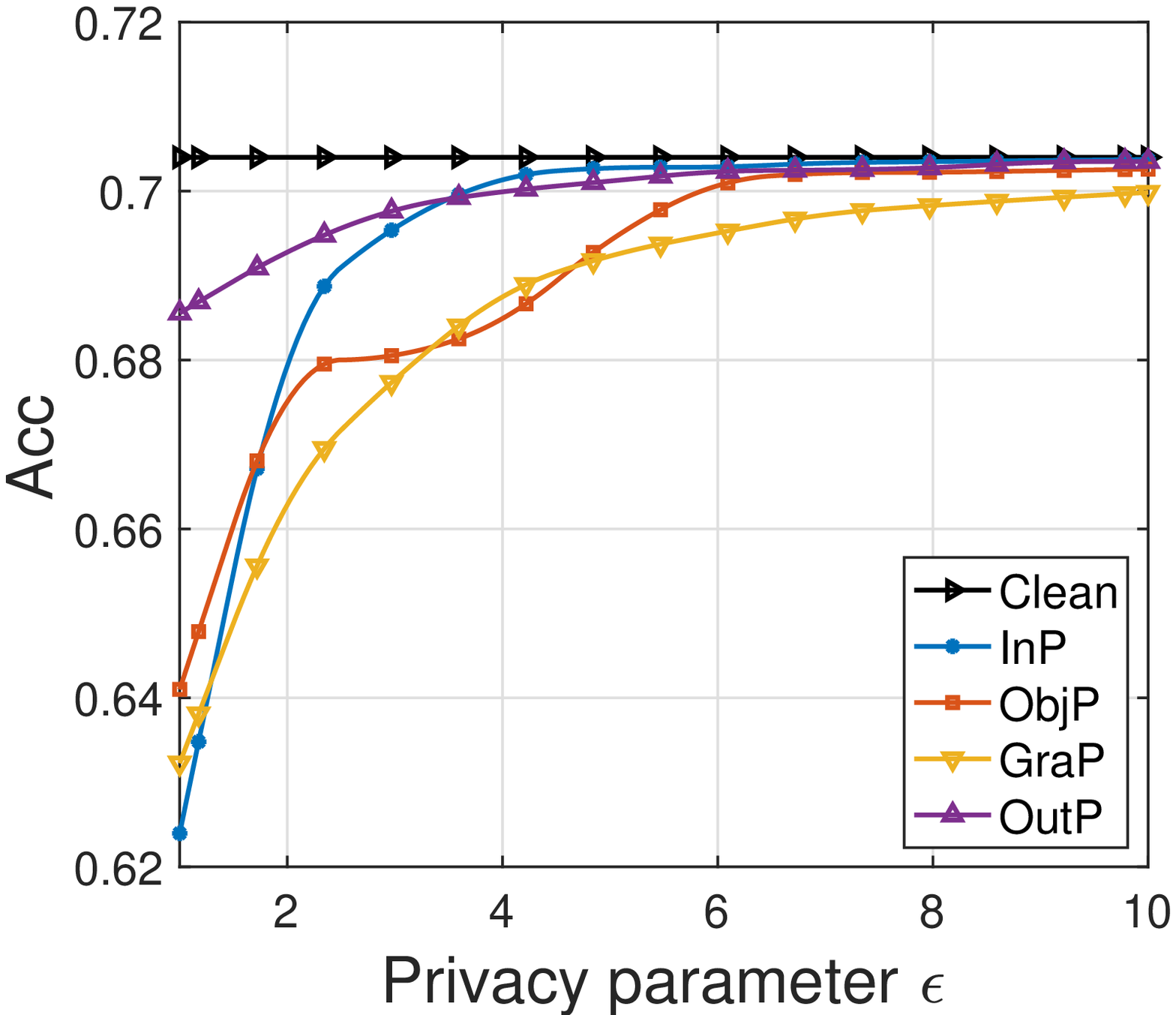}
		\end{minipage}%
	}
	\centering
	\caption{Recovery performance on ML-100K with different privacy parameter $\epsilon$}
	\label{Fig2}
 
\end{figure*}
\begin{figure*}[t!]
	\centering
	\subfigure[$\epsilon$-differential privacy with Gaussian noise]{
		\begin{minipage}[t]{0.42\linewidth}			
			\includegraphics[width=1\linewidth]{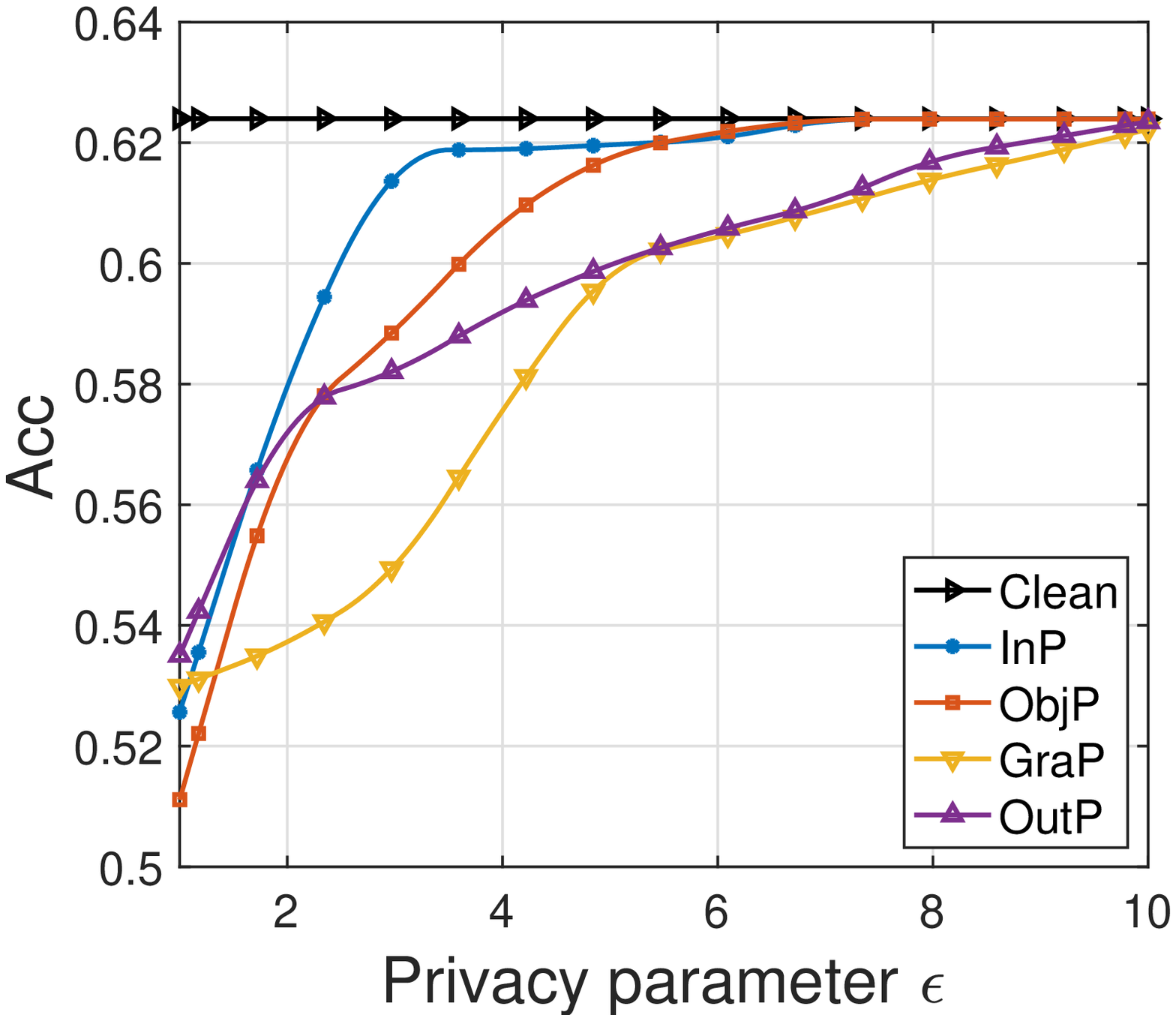}
		\end{minipage}
	}
	\subfigure[$\epsilon$-differential privacy with logistic noise]{
		\begin{minipage}[t]{0.42\linewidth}
				\includegraphics[width=1\linewidth]{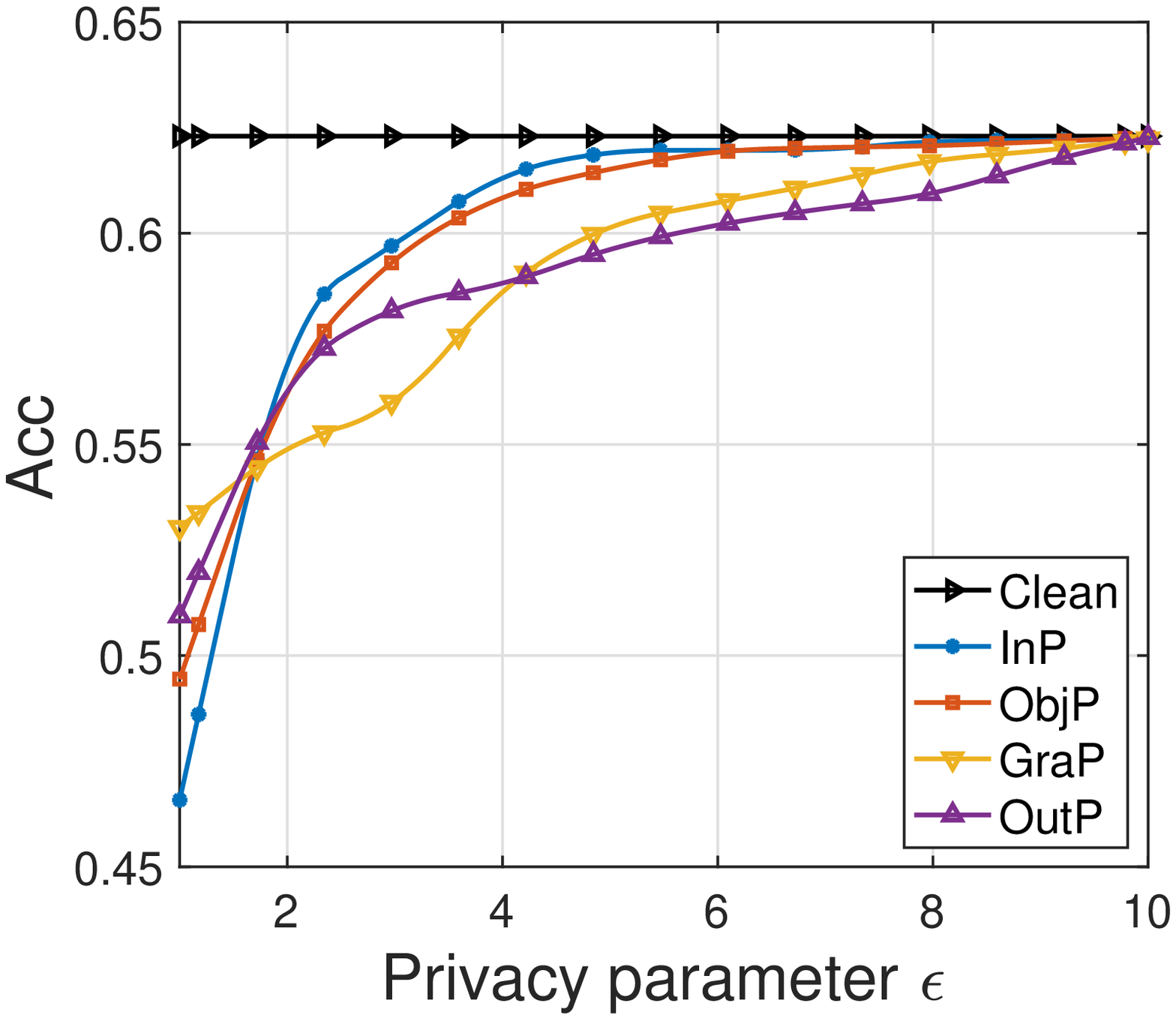}
		\end{minipage}%
	}
	\centering
	\caption{Recovery performance on RC with different privacy parameter $\epsilon$}
	\label{Fig3}
\end{figure*}

{\bf 1. Synthetic data} We construct the underlying matrix $\mathbf{M} \in \mathbb{R}^{d_1 \times d_2}$ with rank-$r$ by $\mathbf{M} =\mathbf{M}_1 \mathbf{M}_2^\top$, where all entries of $\mathbf{M}_1$ and $\mathbf{M}_2$ are drawn from the uniform distribution on the range of $[-1/2, 1/2]$ and $\mathbf{M}_1\in \mathbb{R}^{d_1\times r}, \mathbf{M}_2 \in \mathbb{R}^{d_2\times r}$. To satisfy the max-norm constraint, we scale $\mathbf{M}$ as
	\begin{equation*}
		\mathbf{M}_{ij} = \frac{\alpha \mathbf{M}_{ij}}{\max_{i,j} \mathbf{M}_{ij}}
	\end{equation*}
	such that $\|\mathbf{M}\|_{\infty} = \alpha$. Afterwards, we introduce Gaussian and Logistic noises to generate target binary matrix $\mathbf{Y}$ via Eq.~\eqref{generateYY} with observation ratio $15\%$, respectively. The scale of Gaussian noises is $1$. We set $r = \alpha = 1$ and $d_1 = d_2 = 100$. The radius of nuclear-norm ball is set to $\tau = \alpha  \sqrt{d_1 d_2 r}$. 

{\bf 2. MovieLens 100K~(ML-100K)~\cite{harper2015movielens}} The dataset contains 100,000 ratings from 943 users across 1,682 films, with each rating occurring on a scale from 1 to 5. We convert these ratings to binary observations for testing purposes by comparing each rating to the average rating for the entire dataset (which is approximately 3.5). Ratings below the average rating are set as $-1$, while those above the average rating are set as $1$. We use the default split provided for training and testing in the original dataset. The data are available for download at \url{http//www.grouplens.org/node/73}.

{\bf 3. RC~\cite{beckerleg2020divide}} The dataset contains customer reviews from138 customers for 130 restaurants, in which ratings are in $\{0, 1, 2\}$.
 We map them to binary data by converting two-star ratings to $1$ and converting the zero- or one-star ratings to $-1$. Then we randomly split RC data into the training set and testing set with a ratio of $8:2$. The data are available for download at \url{https//www.kaggle.com/uciml/restaurant-data-with-consumer-ratings}.
\begin{table}[h!]
	\renewcommand\arraystretch{1.3}
	\normalsize
	\centering
	\caption{Description of real-world datasets} 
	\begin{tabular}{lcccc}
		\hline
	    Datasets  &  Users  &  Items  &  Ratings  &  Density  \\
	    \hline
	    \hline
	    ML-100K  &  $943$  &  $1682$  &  $\{1,2,3,4,5\}$  &  $6.3\%$  \\
	    \hline
	    RC      &   $138$  &  $130$   &  $\{0,1,2\}$   &  $5.2\%$   \\
	    \hline
	\end{tabular}
	\label{TB1}
\end{table}

For the simulated data, we use the average relative error~(ARE) to evaluate the performance
\begin{equation*}
	\operatorname{ARE} = \frac{\|\widehat{\mathbf{M}} - \mathbf{M}\|_F^2}{\|\mathbf{M}\|_F^2}.
\end{equation*}

For real data, however, since the scoring matrix of the real data does not have a corresponding implicit probability matrix $\mathbf{M}$, there is no ``ground truth'' that can be used to measure the recovery accuracy. Thus, we evaluate the overall effect by comparing whether the sign of the estimated value of $\widehat{\mathbf{M}}$ accurately predicts the sign of the remaining test set. That is, we use the correct match accuracy (Acc) to evaluate the performance:
\begin{equation*}
	\operatorname{Acc} = \frac{\#_{(i,j)\in \text{test data}} \{ \operatorname{sgn}(\widehat{\mathbf{M}}_{ij}) = \operatorname{sgn}(\mathbf{Y}_{ij}) \}}{\#\{\text{test data}\}},
\end{equation*}
where $\text{sgn}(x)$ denotes the symbol function and $\#\{\cdot\}$ denotes the number of set $\{\cdot\}$. 

We conduct four perturbation mechanisms to ensure differential privacy on all datasets with logistic noise and Gauss noise. We denote input, objective, gradient, and output perturbation by InP, ObjP, GraP, and OutP respectively. For comparison, we apply SPG without any perturbation, denoted by Clear, as the baseline. We consider the range of privacy parameter $\epsilon \in [1,10]$. We notice that small $\epsilon$ can impair the recovery performance while maintaining a high level of privacy-preserving. On the contrary, a large $\epsilon$ provides weak privacy protection, whereas matters very little to the recovery performance. Therefore, we can obtain an efficient and flexible tradeoff between the recovery performance and level of privacy protection.

\subsection{Recovery Performance on Different Datasets}\label{5.2}

\subsubsection{Performance on Synthetic Data}
We average the results over $40$ draws of $\widehat{\mathbf{M}}$. Fig.~\ref{Fig1} shows the performance of three privacy-deserving mechanisms with both Gaussian and logistic noises. In general, as the privacy allowed to leak gradually increases, the recovery errors of the three mechanisms under both noises gradually decrease and approach Clear. ObjP, GraP, and InP performance is relatively stable, which means that they are less sensitive to the changes in privacy parameter $\epsilon$. In contrast, OutP performs worse when $\epsilon$ goes smaller because it directly adds noises to the reconstructed matrix $\widehat{\mathbf{M}}$ obtained by the SPG algorithm. As $\epsilon$ becomes larger~(e.g., $\epsilon = 4$), the performance of OutP is comparable to the other methods.

\begin{figure*}[t!]
	\setlength{\abovecaptionskip}{-0.2cm}
	\includegraphics[scale = 0.32]{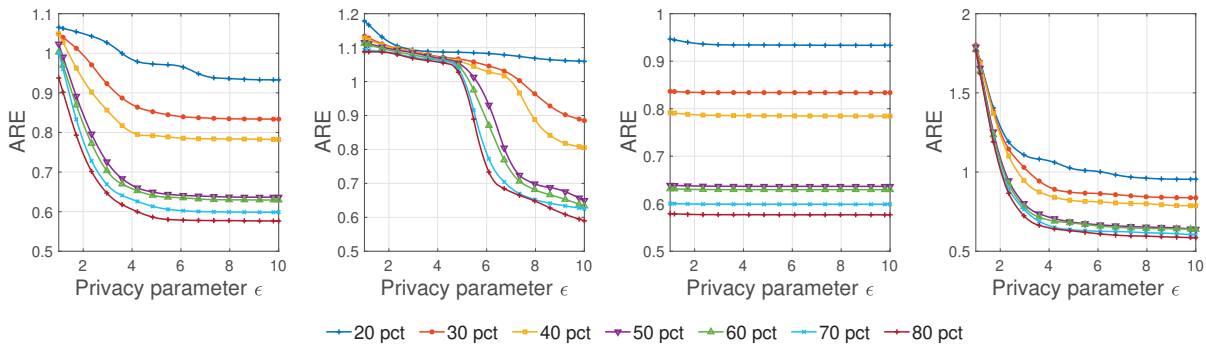}
	\centering
	\caption{Recovery performance with Gauss noises under different observation ratios}
	\label{Fig4}
\end{figure*}

\begin{figure*}[t!]
	\setlength{\abovecaptionskip}{-0.2cm}
	\includegraphics[scale = 0.32]{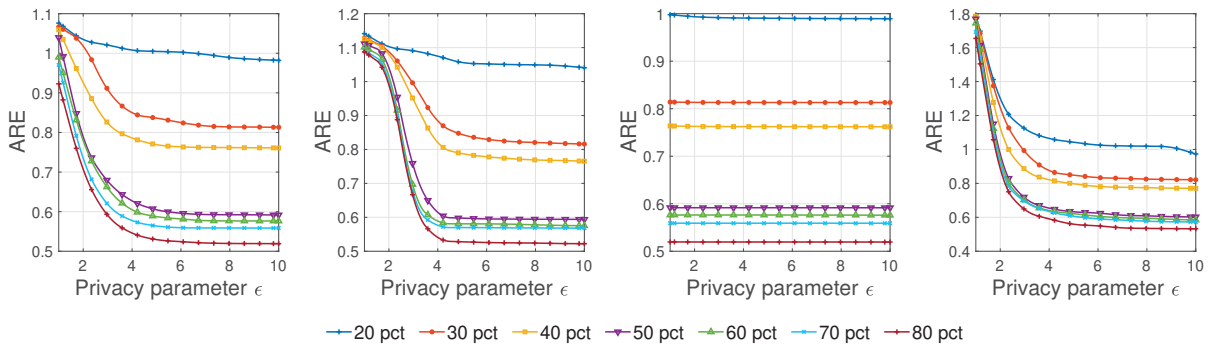}
	\centering
	\caption{Recovery performance with logistic noises under different observation ratios}
	\label{Fig5}
\end{figure*}

\subsubsection{Performance on Real-World Datasets}
We repeat the experiment on real-world datasets $10$ times and average these results in Fig.~\ref{Fig2}. The results show that OutP achieves the best results among the four perturbation methods. When the parameters $\epsilon$ are relatively small, the other three methods are less effective, which is different from our synthetic results. This is because for the real-world datasets, we cannot obtain the probability matrix $\mathbf{M}$ and the specific probability function used to generate the observation matrix $\mathbf{Y}$, which could influence the effect of InP to a certain extent. For ObjP and GraP, since the magnitude of noises for privacy protection depends on the probability distribution, this also leads to inferior results for small $\epsilon$. In addition, the prediction performances of our proposed four mechanisms all become better as the privacy parameter $\epsilon$ goes larger, and gradually approaches that of the Clear approach without differential privacy. Fig.~\ref{Fig2} shows that at approximately $\epsilon=4$, all the four perturbation mechanisms achieve the accuracy larger than $68\%$.

Fig.~\ref{Fig3} shows the prediction performance on the RC dataset for all perturbation mechanisms. The results are also averaged over $10$ experiments. As with ML-100K, we observe that there are some differences between the results on the RC dataset and the synthetic results. When $\epsilon$ is relatively small~(e.g., $\epsilon = 1$), the results of InP, ObjP, and GraP are unsatisfactory. This again illustrates the effect of knowing the specific distribution function for generating the observation matrix. Overall, these mechanisms can obtain the expected trade-off between the privacy and accuracy, which adds validity to our proposed framework in practical scenarios.

\subsection{Recovery Performance with Different Observation Ratios}\label{5.3}
To evaluate the effect of different observation ratios on our proposed perturbation mechanisms, we conduct experiments with observation ratios ranging from $20$ to $80$ with both Gaussian and logistic noises. For each case, we average the results over $10$ experiments. Since one-bit matrix completion relies on the observed entries to estimate the unknown part, the observation ratio can be viewed as training size. Fig.~\ref{Fig4} and~\ref{Fig5} indicate that as the observation ratio increases, the performance of all four mechanisms becomes better. For InP and OutP, ARE first decreases as the parameter $\epsilon$ becomes more significant, but levels off after approximately $\epsilon = 5$. This is consistent with our theoretical analysis since InP and OutP do not directly affect the estimation process of the SPG algorithm. When the level of added noises gradually decreases, the influence on the observation matrix or the recovery matrix also gradually decreases. In contrast, the recovery performance of ObjP changes drastically with increasing $\epsilon$. Interestingly, when the privacy parameter $\epsilon \leq  1$, the performances of ObjP and OutP with both types of noises are less prone to change with the observation ratio. For ObjP, adding noises to the objective function can affect the convergence of the algorithm, which induces bad performance regardless of the training size. OutP influences the estimation of the SPG algorithm directly. When the parameter $\epsilon$ is small and the added noises are relatively large, this perturbation will still destroy the final estimation result, although increasing the observation ratio can get a reasonable estimate.

\section{Conclusion}\label{seccon}
One-bit matrix completion has a wide variety of applications in recommendation systems, yet suffering from privacy concerns. To address this issue, we have applied differential privacy and proposed a unified privacy-preserving framework. Our framework contains four perturbation mechanisms that work in different stages. For each mechanism, we have provided a theoretical guarantee that ensures $\epsilon$-differential privacy for our proposal. Furthermore, we have analyzed the recovery error bound and give provable results. Experiments on both synthetic and real-world data have demonstrated that our proposed mechanisms yield gratifying performance while maintaining high-level privacy.

This work offers essential insights for future studies. First of all, our framework collects private data from users and assumes the server that conducts SPG is trusted. When the server is untrusted, however, there exists a risk of information leakage. This case is worthy of future investigation. Second, we can apply our proposal to provide privacy protection for several important variants of one-bit matrix completion~\cite{cai2013max, bhaskar20151, ni2016optimal}. Lastly, we are interested in extending our proposal to the scenarios where data are presented by high dimensional tensor consisting of binary values.

\appendices
\section{Proof of Theorem~\ref{Theo1}}\label{proof1}
\begin{proof}
	Denote the input mechanism as $\mathcal{A} : \mathcal{D} \mapsto \mathbb{R}^d$ and $\mathbf{Y}, \mathbf{Y}^\prime \in \mathcal{D}$ represent two sets of ratings differing only at one record, i.e., $\mathbf{Y}_{pq}$, ${\mathbf{Y}}^\prime_{pq}$. Without loss of generality, we assume $\mathbf{Y}_{pq} = -{\mathbf{Y}}^\prime_{pq} = 1$. Considering that the input mechanism can change the original input to its counterpart by a certain probability, we can leverage it to deduce the probability ratio of having the same output region for these two datasets. For one output region $S$, if ${\mathbf{Y}}^\prime_{pq}$ has been changed to $1$, then
	\begin{equation*}
	\begin{split}
	\frac{\mathbb{P}[\mathcal{A}(\mathbf{Y}) \in S]}{\mathbb{P}[\mathcal{A}(\mathbf{Y}^\prime) \in S]} = \frac{1-p_1}{p_2} \leq  \frac{p_2e^\epsilon}{p_2} = e^\epsilon;
	\end{split}
	\end{equation*}
If ${\mathbf{Y}}_{pq}$ has been changed to $-1$, then	\begin{equation*}
	\begin{split}
	\frac{\text{P}[\mathcal{A}(\mathbf{Y}) \in S]}{\text{P}[\mathcal{A}(\mathbf{Y}^\prime) \in S]} = \frac{p_1}{1-p_2} \leq  \frac{(1-p_2)e^\epsilon}{1-p_2} = e^\epsilon.
	\end{split}
	\end{equation*}
Thus, the input mechanism satisfies $\epsilon$-differential privacy. 

As for building the recovery error bound, we follow the proof of~\cite[Theorem 1]{davenport20141}. It is worth noting that we focus on a new probability function defined as 
	\begin{equation*}
		c(x) = h(x)(1-p) + [1-h(x)]p,
	\end{equation*}
	 where $h(x)$ is defined in Section~\ref{3.2}. Our proof contains mainly two parts. Recall the definitions
	\begin{equation*}
		L_{c,\alpha}=\sup _{|x| \leq  \alpha} \frac{\left|c^{\prime}(x)\right|}{c(x)(1-c(x))}, \beta_{c,\alpha}=\sup _{|x| \leq  \alpha} \frac{c(x)(1-c(x))}{\left(c^{\prime}(x)\right)^2}.
	\end{equation*}
 	We first derive the upper bounds of $L_{c,\alpha}$ and $\beta_{c,\alpha}$, respectively. Then we show that both functions
	\begin{equation*}
		\frac{1}{L_{c,\gamma}} \log \left(\frac{c(x)}{c(0)}\right) \text{ and } \frac{1}{L_{c,\gamma}} \log \left(\frac{1-c(x)}{1-c(0)}\right)
	\end{equation*}
	are contractions that vanish at $0$.
	
	\begin{enumerate}
	\item[i)] {\bf The upper bounds of $L_{c,\alpha}$ and $\beta_{c,\alpha}$.} 	
	
	By the definition of $c(x)$ and noticing $0 < p = \frac{1}{1+e^\epsilon} < \frac{1}{2}$, we have
	\begin{equation*}
		\begin{split}
		&c^\prime(x) = (1-2p)h^\prime(x),\\
		&c(x)(1-c(x)) = p(1-p)\left[h^2(x)+(1-h(x))^2\right] + h(x) ((1-h(x))\left[p^2+(1-p)^2\right] \\
			 	&	\qquad \qquad \qquad	~~\geq h(x)((1-h(x)).
		\end{split}
	\end{equation*}
	Thus, we can bound $L_{c,\alpha}$ as
	\begin{equation*}
		L_{c,\alpha} \leq  \sup _{|x| \leq  \alpha}\frac{(1-2p)|h^\prime(x)|}{h(x)((1-h(x))} = (1-2p)L_{h,\alpha}.
	\end{equation*}
	Note that the logistic model satisfies $L_{h,\alpha}=1$ and Gaussian model has $L_{h, \alpha} \leq  8 \frac{\alpha / \sigma+1}{\sigma}$, which induces the upper bound of $L_{c,\alpha}$. For $\beta_{c,\alpha}$, we derive that 
	\begin{eqnarray*}
				 \left(c^\prime(x)\right)^2 & \hspace{-1mm} = & \hspace{-1mm}  (1-2p)^2\left(h^\prime(x)\right)^2,\\
			c(x)(1-c(x)) & \hspace{-1mm} = & \hspace{-1mm} p(1-p)\left[h^2(x)+(1-h(x))^2\right] +  h(x) ((1-h(x))\left[p^2+(1-p)^2\right] \\
			 		     & \hspace{-1mm} \leq & \hspace{-1mm} \frac{1}{2}+\frac{1}{2}(1-2p)^2h(x)(1-h(x)),
	\end{eqnarray*}
	where the last equation is due to Cauchy-Schwartz inequality. Thus, we have
	\begin{equation*}
		\begin{split}
			\beta_{c,\alpha} &\leq  \sup _{|x| \leq  \alpha} \frac{\frac{1}{2}+\frac{1}{2}(1-2p)^2h(x)(1-h(x))}{(1-2p)^2\left(h^{\prime}(x)\right)^2} = \frac{1}{2(1-2p)^2}+\frac{1}{2}\beta_{h,\alpha}.
		\end{split}
	\end{equation*}
	Note that the logistic model satisfies $\beta_{h,\alpha} = e^\alpha$ and Gaussian model has $\beta_{h, \alpha} \leq  \pi\sigma^{2} e^{\alpha^{2} / 2 \sigma^{2}}$, which induces the upper bound of $\beta_{c,\alpha}$.

	\item[ii)]{\bf The contraction property.}
	
	We first notice that $c(x) \in (0,1), \text{ when } x\in [-\alpha, \alpha]$. Thus both functions
	\begin{equation*}
		\frac{1}{L_{c,\gamma}} \log \left(\frac{c(x)}{c(0)}\right) \text{ and } \frac{1}{L_{c,\gamma}} \log \left(\frac{1-c(x)}{1-c(0)}\right)
	\end{equation*}
	are well defined. Then,
	\begin{align}
		\left|\frac{1}{L_{c,\gamma}} \log \left(\frac{c(x)}{c(0)}\right) - \frac{1}{L_{c,\gamma}} \log \left(\frac{c(y)}{c(0)}\right)\right|&= \frac{1}{L_{c,\gamma}}\left| \log \left(c(x)\right) - \log \left(c(y)\right)\right|\nonumber \\
		&= \frac{1}{L_{c,\gamma}}\left| \frac{c^\prime(\xi_1)}{c(\xi_1)}\right|\left| x-y\right|\label{Lcc},
	\end{align}
	where $\xi_1 \in (x,y)$ and the last equation is due to Mean Value Theorem. By the definition of $L_{c,\gamma}$, \eqref{Lcc} turns to
	\begin{equation*}
		\begin{split}
		\left|\frac{1}{L_{c,\gamma}} \log \left(\frac{c(x)}{c(0)}\right) - \frac{1}{L_{c,\gamma}} \log \left(\frac{c(y)}{c(0)}\right)\right|&= \frac{|c^\prime(\xi_1)|}{c(\xi_1)(1-c(\xi_1))L_{c,\gamma}}(1-c(\xi_1))|x-y| \nonumber \\		
		&\leq   (1-c(\xi_1)) |x-y| \nonumber\\
		&\leq  |x-y|\nonumber.
		\end{split}
	\end{equation*}

	Similarly, we have 
	\begin{align}
		\left|\frac{1}{L_{c,\gamma}} \log \left(\frac{1-c(x)}{1-c(0)}\right) - \frac{1}{L_{c,\gamma}}\log \left(\frac{1-c(y)}{1-c(0)}\right)\right|&= \frac{1}{L_{c,\gamma}}\left| \log \left(1-c(x)\right) - \log \left(1-c(y)\right)\right|\nonumber \\
		&= \frac{1}{L_{c,\gamma}}\left| \frac{c^\prime(\xi_2)}{1-c(\xi_2)}\right|\left| x-y\right| \nonumber \\
		&= \frac{|c^\prime(\xi_1)|}{c(\xi_1)(1-c(\xi_1))L_{c,\gamma}}c(\xi_1)|x-y| \nonumber \\		
		&\leq   c(\xi_1)|x-y| \nonumber\\
		&\leq  |x-y|\nonumber.
	\end{align}
	Thus both functions are contractions. 
	\end{enumerate}
	
	Finally, following the analysis of~\cite[Theorem 1]{davenport20141}, we complete our proof.
\end{proof}

\section{Proof of Theorem~\ref{the OP}}\label{proof2}
	\begin{proof}
		Suppose that $\mathbf{Y}$ and $\mathbf{Y}^\prime$ are two sets of ratings differing at one record $\mathbf{Y}_{pq}$ and $\mathbf{Y}^\prime_{pq}$. Denote $\mathbf{H} = \{\mathbf{H}_{ij}\}$ and $\mathbf{H}^\prime = \{\mathbf{H}^\prime_{ij}\}$ as the noise matrices for $\mathbf{Y}$ and $\mathbf{Y}^\prime$, respectively. Since the objective function is convex, we have that $\forall (i,j) \in \Omega$, 
	\begin{equation*}
		\nabla \mathcal{L}^\prime_{\Omega, \mathbf{Y}}(\widehat{\mathbf{M}})=\nabla \mathcal{L}^\prime_{\Omega, \mathbf{Y}^\prime}(\widehat{\mathbf{M}})=0,
	\end{equation*}
	which indicates 
	\begin{equation}\label{gradient}
    \begin{split}
      \mathbf{H}_{ij} -(1+\mathbf{Y}_{ij})\frac{h^\prime(\widehat{\mathbf{M}}_{ij})}{2h(\widehat{\mathbf{M}}_{ij})}+(1-\mathbf{Y}_{ij})\frac{h^\prime(\widehat{\mathbf{M}}_{ij})}{2(1-h(\widehat{\mathbf{M}}_{ij}))}  &= \mathbf{H}_{ij}^\prime -(1+{\mathbf{Y}}_{ij})\frac{h^\prime(\widehat{\mathbf{M}}_{ij})}{2h(\widehat{\mathbf{M}}_{ij})}  \\&+ (1-\mathbf{Y}^\prime_{ij})\frac{h^\prime(\widehat{\mathbf{M}}_{ij})}{2(1-h(\widehat{\mathbf{M}}_{ij}))}.
    \end{split}
	\end{equation}
Without loss of generality, suppose that $\mathbf{Y}_{pq} = -\mathbf{Y}^\prime_{pq} = 1$. If $(i,j) \neq (p,q)$, then $\mathbf{Y}_{ij} = \mathbf{Y}_{ij}^\prime$ and we can derive from~\eqref{gradient} that $ \mathbf{H}_{ij} =  \mathbf{H}_{ij}^\prime$. 

If $(i, j) = (p, q)$, we can obtain from~\eqref{gradient} that
	\begin{equation*}
	 \mathbf{H}_{pq} -  \mathbf{H}_{pq}^\prime = \frac{h^\prime(\widehat{\mathbf{M}}_{pq})}{h(\widehat{\mathbf{M}}_{pq})} + \frac{h^\prime(\widehat{\mathbf{M}}_{pq})}{1-h(\widehat{\mathbf{M}}_{pq})}, 
	\end{equation*}
which means that $\forall (i,j) $,
	\begin{equation*}
	| \mathbf{H}_{pq} -  \mathbf{H}_{pq}^\prime| \leq  \underset{i,j}{\max}~\left[\frac{h^\prime(\widehat{\mathbf{M}}_{ij})}{h(\widehat{\mathbf{M}}_{ij})} + \frac{h^\prime(\widehat{\mathbf{M}}_{ij})}{1-h(\widehat{\mathbf{M}}_{ij})}\right] :=\Delta.
	\end{equation*}		
Therefore, for any pair of $ \mathbf{Y}_{p q}$ and $\mathbf{Y}^\prime_{pq}$,
	\begin{equation*}
	\begin{split}
	 \frac{\mathbb{P}\left[\mathbf{M} = \widehat{\mathbf{M}} \mid \mathbf{Y}\right]}{\mathbb{P}\left[\mathbf{M} = \widehat{\mathbf{M}} \mid \mathbf{Y}^\prime\right]}
	=&  \frac{\prod_{i\in [d_1], j \in [d_2]} p\left( \mathbf{H}_{ij}\right)}{\prod_{i\in [d_1], j \in [d_2]} p\left( \mathbf{H}_{ij}^{\prime}\right)}  \\
	=&  \exp  \left\{  -  \frac{\epsilon  \left(\sum_{i\in [d_1], j \in [d_2]}| \mathbf{H}_{ij}| -  \sum_{i\in [d_1], j \in [d_2]}| \mathbf{H}_{ij}^{\prime}|\right)}{ \Delta}\right\}\\
	=& \exp\left\{-\frac{\epsilon\left(| \mathbf{H}_{pq}|-| \mathbf{H}_{pq}^{\prime}|\right)}{ \Delta}\right\}\\
	\leq  & e^\epsilon.\\
	\end{split}
	\end{equation*}
Thus, Alg.~\ref{Al4} satisfies $\epsilon$-differential privacy.

Now we derive the recovery error bound. Note that 
\begin{equation*}
  \begin{aligned} 
    \mathcal{L}^\prime_{\Omega, \mathbf{Y}}\left(\mathbf{X}\right) &= -\frac{1}{2}\sum_{(i,j) \in \Omega}\left\{{\left(1 +  \mathbf{Y}_{ij}\right)\log[h(\mathbf{X}_{ij})]}{+ \left(1- \mathbf{Y}_{ij}\right)(\log[1-h(\mathbf{X}_{ij})])} +  \mathbf{H}_{ij}\mathbf{X}_{ij}\right\}\\
    &= \mathcal{L}_{\Omega, \mathbf{Y}}\left(\mathbf{X}\right) + \sum_{(i,j) \in \Omega} \mathbf{H}_{ij}\mathbf{X}_{ij}.
  \end{aligned}
\end{equation*}
Denote 
\begin{equation*}
	\begin{split}
		\overline{\mathcal{L}}_{\Omega, \mathbf{Y}}(\mathbf{X})&=\mathcal{L}_{\Omega, \mathbf{Y}}(\mathbf{X})-\mathcal{L}_{\Omega, \mathbf{Y}}(\mathbf{0}),\\
		\overline{\mathcal{L}^\prime}_{\Omega, \mathbf{Y}}(\mathbf{X})
		&=\mathcal{L}^\prime_{\Omega, \mathbf{Y}}(\mathbf{X})-\mathcal{L}^\prime_{\Omega, \mathbf{Y}}(\mathbf{0}).\\
	\end{split}
\end{equation*}
Notice that $\mathcal{L}_{\Omega, \mathbf{Y}}(\mathbf{0}) = \mathcal{L}^\prime_{\Omega, \mathbf{Y}}(\mathbf{0})$. Thus,
\begin{equation*}
	\begin{split}
		\overline{\mathcal{L}^\prime}_{\Omega, \mathbf{Y}}(\mathbf{X})&=\mathcal{L}^\prime_{\Omega, \mathbf{Y}}(\mathbf{X})-\mathcal{L}^\prime_{\Omega, \mathbf{Y}}(\mathbf{0}) \\
		&=\mathcal{L}^\prime_{\Omega, \mathbf{Y}}(\mathbf{X}) - \mathcal{L}_{\Omega, \mathbf{Y}}(\mathbf{0}) \\
		&= \mathcal{L}_{\Omega, \mathbf{Y}}\left(\mathbf{X}\right) + \sum_{(i,j) \in \Omega} \mathbf{H}_{ij}\mathbf{X}_{ij}- \mathcal{L}_{\Omega, \mathbf{Y}}(\mathbf{0}) \\
		&= \overline{\mathcal{L}}_{\Omega, \mathbf{Y}}(\mathbf{X}) + \sum_{(i,j) \in \Omega} \mathbf{H}_{ij}\mathbf{X}_{ij}.
	\end{split}
\end{equation*}
For any $\mathbf{X},\mathbf{M} \in \mathcal{C}$, we obtain
\begin{equation}\label{theo2-1}
	\begin{split}
		 &\quad ~\overline{\mathcal{L}^\prime}_{\Omega, \mathbf{Y}}(\mathbf{X}) - \overline{\mathcal{L}^\prime}_{\Omega, \mathbf{Y}}(\mathbf{M})=  \overline{\mathcal{L}}_{\Omega, \mathbf{Y}}\left(\mathbf{X}\right) -  \overline{\mathcal{L}}_{\Omega, \mathbf{Y}}\left(\mathbf{M}\right) + \sum_{(i,j) \in \Omega} \mathbf{H}_{ij}(\mathbf{X}_{ij}- \mathbf{M}_{ij}).
	\end{split}
\end{equation}
According to~\cite[Theorem 1]{davenport20141}, we have
\begin{equation*}
	\begin{aligned}
   	&\quad~ \overline{\mathcal{L}}_{\Omega, \mathbf{Y}}(\mathbf{X}) - \overline{\mathcal{L}}_{\Omega, \mathbf{Y}}(\mathbf{M})\leq  2\sup _{\mathbf{X} \in \mathcal{C}}\left|\overline{\mathcal{L}}_{\Omega, \mathbf{Y}}(\mathbf{X})-\mathrm{E}\left[\overline{\mathcal{L}}_{\Omega, \mathbf{Y}}(\mathbf{X})\right]\right| -n \operatorname{D}(h(\mathbf{M}) \| h(\mathbf{X})),
\end{aligned}
\end{equation*}
where $\operatorname{D}(\cdot \| \cdot)$ denotes the KL divergence of two distributions. Then, Eq.~\eqref{theo2-1} turns to
\begin{equation*}
	\begin{aligned}
	\overline{\mathcal{L}^\prime}_{\Omega, \mathbf{Y}}(\mathbf{X}) - \overline{\mathcal{L}^\prime}_{\Omega, \mathbf{Y}}(\mathbf{M}) &\leq  2\sup _{\mathbf{X} \in \mathcal{C}}\left|\overline{\mathcal{L}}_{\Omega, \mathbf{Y}}(\mathbf{X})-\mathrm{E}\left[\overline{\mathcal{L}}_{\Omega, \mathbf{Y}}(\mathbf{X})\right]\right| -n \operatorname{D}(h(\mathbf{M}) \| h(\mathbf{X})) \\&+ \sum_{(i,j) \in \Omega} \mathbf{H}_{ij}(\mathbf{X}_{ij}- \mathbf{M}_{ij})\nonumber.
\end{aligned}
\end{equation*}
Since $\widehat{\mathbf{M}}$ is the solution to problem~\eqref{pobj}, we also have that for $\forall \mathbf{X} \in \mathcal{C}$, $\overline{\mathcal{L}^\prime}_{\Omega, \mathbf{Y}}(\mathbf{X}) \geq \overline{\mathcal{L}^\prime}_{\Omega, \mathbf{Y}}(\widehat{\mathbf{M}})$. Thus
\begin{align}
	&0 \leq  2\sup _{\mathbf{X} \in \mathcal{C}}\left|\overline{\mathcal{L}}_{\Omega, \mathbf{Y}}(\mathbf{X})-\mathrm{E}\left[\overline{\mathcal{L}}_{\Omega, \mathbf{Y}}(\mathbf{X})\right]\right| -n \operatorname{D}(h(\widehat{\mathbf{M}}) \| h(\mathbf{X}))+ \sum_{(i,j) \in \Omega} \mathbf{H}_{ij}(\mathbf{X}_{ij}- \widehat{\mathbf{M}}_{ij})\nonumber,
\end{align}
which means
\begin{align}\label{theo2-2}
	\operatorname{D}(h(\widehat{\mathbf{M}}) \| h(\mathbf{X}))& \leq  \frac{2}{n}\sup _{\mathbf{X} \in \mathcal{C}}\left|\overline{\mathcal{L}}_{\Omega, \mathbf{Y}}(\mathbf{X})-\mathrm{E}\left[\overline{\mathcal{L}}_{\Omega, \mathbf{Y}}(\mathbf{X})\right]\right| + \frac{1}{n}\sum_{(i,j) \in \Omega} \mathbf{H}_{ij}(\mathbf{X}_{ij} -\widehat{\mathbf{M}}_{ij}).
\end{align}
We apply~\cite[Lemma 1]{davenport20141} to bound the first term in the right-hand side of Eq.~\eqref{theo2-2}. Specifically, with probability at least $1-C_3/(d_1+d_2)$,
\begin{align}\label{theo2-3}
	& \quad~\frac{1}{n}\sup _{\mathbf{X} \in \mathcal{C}}\left|\overline{\mathcal{L}}_{\Omega, \mathbf{Y}}(\mathbf{X})-\mathbb{E} \overline{\mathcal{L}}_{\Omega, \mathbf{Y}}(\mathbf{X})\right|\leq C_4\alpha L_{h,\gamma} \sqrt{\frac{r\left(d_{1}+d_{2}\right)}{n}} \sqrt{\frac{C_{\Omega,\mathbf{Y}}}{n}},
\end{align}
where $C_{\Omega, \mathbf{Y}} = \sqrt{n+d_{1} d_{2} \log \left(d_{1} d_{2}\right)}$ and $C_3, C_4$ are constants. Considering the constraint $\|\mathbf{M}\|_{\infty} \leq \alpha$, we can bound the second term in the right-hand side of Eq.~\eqref{theo2-2} by
\begin{equation*}
    \begin{aligned}
    \frac{1}{n}\sum_{(i, j) \in \Omega} \mathbf{H}_{ij}(\mathbf{X}_{ij} - \widehat{\mathbf{M}}_{ij}) \leq \frac{2\alpha}{n}  \sum_{(i, j) \in \Omega} \mathbf{H}_{ij},
    \end{aligned}
\end{equation*}
where $\mathbf{H}_{ij}$ has the distribution with density $p( \mathbf{H} ) \propto \exp \left(-\frac{\varepsilon| \mathbf{H}|}{\Delta}\right)$ and $\Delta$ is defined in Theorem~\ref{the OP}. Each $ \mathbf{H}_{ij}$ can be independently generated by $ \mathbf{H}_{ij} = \mathbf{L}_{ij}\mathbf{V}_{ij}$, where $\mathbf{L}_{ij}$ satisfies Gamma distribution $\Gamma(1, \Delta / \epsilon)$ and $\mathbf{V}_{ij}$ is sampled in $\{+1, -1\}$ with the same probability. Therefore, we have
\begin{equation*}
    \begin{aligned}
    \mathbb{E}\left[ \frac{1}{n}\sum_{(i, j)\in \Omega} \mathbf{H}_{ij}\right] = \frac{1}{n}\left(\sum_{(i, j)\in \Omega}(\mathbb{E}[\mathbf{V}_{ij}]\mathbb{E}[\mathbf{L}_{ij}])\right) = 0,
    \end{aligned}
\end{equation*}
and
\begin{equation*}
    \begin{aligned}
    \mathrm{Var}\left[ \frac{1}{n}\sum_{(i, j)\in \Omega} \mathbf{H}_{ij}\right] = \frac{1}{n^2}\left(\sum_{(i, j)\in \Omega}\mathrm{Var}[ \mathbf{H}_{ij}] \right) = \frac{1}{n}\mathrm{Var}[ \mathbf{H}_{ij}].
    \end{aligned}
\end{equation*}
Considering the independence of $\mathbf{L}_{ij}$ and $\mathbf{V}_{ij}$, we have
\begin{equation*}
\begin{aligned}
\mathrm{Var}[ \mathbf{H}_{ij}] = \frac{2\Delta^2}{\epsilon^2}.
\end{aligned}
\end{equation*}
Thus, we can obtain
\begin{equation*}
    \begin{aligned}
    \mathrm{Var}\left[ \frac{1}{n}\sum_{(i, j)\in \Omega} \mathbf{H}_{ij}\right] = \frac{2\Delta^2}{n\epsilon^2}.
    \end{aligned}
\end{equation*}
Applying Chebyshev's inequality we arrive at
\begin{equation*}
    \begin{aligned}
    \mathbb{P}\left(\left| \frac{1}{n}\sum_{(i, j)  \in \Omega}  \mathbf{H}_{ij}\right| < a \right) \geq 1 - \frac{2\Delta^2}{n\epsilon^2a^2},~\forall a>0.
    \end{aligned}
\end{equation*}
If $a$ takes the value of $\sqrt{2}\Delta/\epsilon\sqrt[3]{n}$, we have
\begin{equation}\label{theo2-4}
    \mathbb{P}\left(\left| \frac{1}{n}\sum_{(i, j)\in \Omega}  \mathbf{H}_{ij}\right| < \frac{\sqrt{2}\Delta}{\epsilon\sqrt[3]{n}} \right) \geq 1 - \frac{1}{\sqrt[3]{n}}.
\end{equation}
Due to Eq.~\eqref{theo2-3} and Eq.~\eqref{theo2-4}, we have that with probability at least $(1-C_3/(d_1+d_2))(1 - 1/\sqrt[3]{n})$,
\begin{equation*}
	\operatorname{D}(h(\widehat{\mathbf{M}}) \| h(\mathbf{X})) 
	\leq  C_4\alpha L_{h,\gamma} \sqrt{\frac{r\left(d_{1}+d_{2}\right)}{n}} \sqrt{\frac{C_{\Omega,\mathbf{Y}}}{n}} + \frac{\sqrt{2}\Delta}{\epsilon\sqrt[3]{n}}.
\end{equation*}
Considering the fact that the KL divergence is always greater than Hellinger distance, we have that with probability at least $(1-C_3/(d_1+d_2))(1 - 1/\sqrt[3]{n})$,
\begin{equation}\label{theo2-5}
    \operatorname{d}_H^2(h(\widehat{\mathbf{M}}), h(\mathbf{X})) \leq  C_4\alpha L_{h,\gamma} \sqrt{\frac{r\left(d_{1}+d_{2}\right)}{n}} \sqrt{\frac{C_{\Omega,\mathbf{Y}}}{n}} + \frac{\sqrt{2}\Delta}{\epsilon\sqrt[3]{n}}.
\end{equation}
Applying~\cite[Lemma 2]{davenport20141}, we have
\begin{equation*}
	d_{H}^{2}(h(\widehat{\mathbf{M}}), h(\mathbf{X})) \geq \frac{1}{8\beta_{h, \alpha}}\frac{\|\mathbf{M}-\widehat{\mathbf{M}}\|_{F}^{2}}{d_{1} d_{2}}.
\end{equation*}
Combining this with~\eqref{theo2-5}, we obtain that with probability at least $(1-C_3/(d_1+d_2))(1 - 1/\sqrt[3]{n})$,
\begin{equation*}
\begin{aligned}
    \frac{1}{d_{1} d_{2}}\|\widehat{\mathbf{M}}-\mathbf{M}^\star\|_{F}^{2}\leq C_{h,\alpha}\sqrt{\frac{r\left(d_{1}+d_{2}\right)}{n}} \sqrt{\frac{C_{\Omega,\mathbf{Y}}}{n}}+ \frac{C_{h,\alpha}^\prime}{\epsilon\sqrt[3]{n}},
    \end{aligned}
\end{equation*}
where $C_{h,\alpha}=C_{4} \alpha L_{h,\alpha} \beta_{h,\alpha}, C_{h,\alpha}^\prime = 8\sqrt{2}\Delta\beta_{h,\alpha}, C_{\Omega,\mathbf{Y}} = \sqrt{n+d_{1} d_{2} \log (d_{1}+d_{2})}$ and $C_3,C_4$ are constants.
\end{proof}

\section{Proof of Theorem~\ref{the GP}}\label{proof3}
\begin{proof} 
We finish the proof by the following lemma.
\begin{lemma}[Sequential Composition~\cite{mcsherry2009differentially}]
	\label{le3}
    Let the randomized algorithms $\mathcal{A}_i$ that apply to the dataset $\mathcal{X}$ each provide $\epsilon_i$-differential privacy. Then the sequence of $\mathcal{A}_i(\mathcal{X})$ provides $\sum_i\epsilon_i$-differential privacy.
	\end{lemma}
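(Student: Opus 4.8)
The plan is to prove the sequential composition lemma directly from the definition of $\epsilon$-differential privacy, by factorizing the joint output law of the composite mechanism across its components and bounding each factor in isolation. Write $\mathcal{A}(\mathcal{X}) = (\mathcal{A}_1(\mathcal{X}), \ldots, \mathcal{A}_k(\mathcal{X}))$ for the released sequence and fix two datasets $\mathcal{X}, \mathcal{X}'$ that differ in a single entry. I would begin with the case in which the component mechanisms use independent internal randomness and are released together: for a fixed output tuple $(y_1,\ldots,y_k)$, independence yields the factorization $\mathbb{P}[\mathcal{A}(\mathcal{X}) = (y_1,\ldots,y_k)] = \prod_{i=1}^{k} \mathbb{P}[\mathcal{A}_i(\mathcal{X}) = y_i]$, and identically with $\mathcal{X}'$ in place of $\mathcal{X}$.

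Next I would take the ratio of the two joint laws. Since the product splits coordinatewise, this ratio equals $\prod_{i=1}^{k} \mathbb{P}[\mathcal{A}_i(\mathcal{X}) = y_i] / \mathbb{P}[\mathcal{A}_i(\mathcal{X}') = y_i]$, and the $\epsilon_i$-differential privacy of each $\mathcal{A}_i$ bounds its own factor by $e^{\epsilon_i}$. Multiplying over $i$ gives the pointwise bound $e^{\sum_i \epsilon_i}$ on the ratio, uniformly in $(y_1,\ldots,y_k)$. Integrating (or summing, in the discrete case) this pointwise inequality over an arbitrary measurable output region $S$ then produces $\mathbb{P}[\mathcal{A}(\mathcal{X}) \in S] \leq e^{\sum_i \epsilon_i}\,\mathbb{P}[\mathcal{A}(\mathcal{X}') \in S]$, which is precisely $(\sum_i \epsilon_i)$-differential privacy for the composite.

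To cover the adaptive setting that Alg.~\ref{Al2} actually exercises---where the $k$-th noisy gradient step depends on the iterates computed so far---I would replace the unconditional factorization by a chain of conditional laws, $\mathbb{P}[\mathcal{A}(\mathcal{X}) = (y_1,\ldots,y_k)] = \prod_{i=1}^{k} \mathbb{P}[\mathcal{A}_i(\mathcal{X}) = y_i \mid y_1,\ldots,y_{i-1}]$. The crucial point is that, with the prefix $(y_1,\ldots,y_{i-1})$ held fixed, the map $\mathcal{X} \mapsto \mathcal{A}_i(\mathcal{X}\,;\,y_1,\ldots,y_{i-1})$ is itself an $\epsilon_i$-differentially private mechanism, so the per-factor bound $e^{\epsilon_i}$ applies unchanged to each conditional ratio and the telescoping product again contributes $e^{\sum_i \epsilon_i}$.

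I expect the main obstacle to be measure-theoretic rather than algebraic: when the outputs are continuous (as with the Laplace noise in Alg.~\ref{Al2}), the event $\{\mathcal{A}_i(\mathcal{X}) = y_i\}$ has probability zero, so I would phrase the entire argument in terms of densities against a common dominating measure and verify that the pointwise density-ratio bound transfers cleanly to the integrated set probability. A secondary point requiring care is the (conditional) independence of the noise draws across the $k$ components, which is exactly what legitimizes the factorization; here it holds because each iteration samples a fresh, independent Laplace noise matrix.
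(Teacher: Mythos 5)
The paper does not actually prove this lemma---it is quoted verbatim as a known result from~\cite{mcsherry2009differentially} and immediately applied to conclude that the $K$ noisy gradient iterations of Alg.~\ref{Al2} compose to $\epsilon$-differential privacy. So there is no in-paper argument to compare against; your proposal supplies a proof where the paper supplies a citation. Your argument is the standard one and is sound: factorize the joint output law of $(\mathcal{A}_1(\mathcal{X}),\ldots,\mathcal{A}_k(\mathcal{X}))$ using the independence of the internal randomness, bound each factor's ratio by $e^{\epsilon_i}$ via the per-mechanism privacy guarantee, multiply to get a pointwise bound of $e^{\sum_i \epsilon_i}$, and integrate over the output region $S$. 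Two of your refinements deserve emphasis. First, the adaptive version you sketch (conditioning on the prefix $y_1,\ldots,y_{i-1}$ and requiring that each conditional mechanism remain $\epsilon_i$-private) is not a luxury here: in Alg.~\ref{Al2} the gradient $\nabla f(\mathbf{x}_{(k)})$ is evaluated at an iterate that depends on all previous noisy outputs, so the non-adaptive statement of the lemma, as literally written in the paper, does not directly cover the application---your conditional-chain argument is what actually closes that gap. Second, your measure-theoretic caveat is the right one, but note it cuts in both directions: passing from the set-probability definition of $\epsilon_i$-differential privacy to an almost-everywhere density-ratio bound (needed to bound each factor) requires a Radon--Nikodym/differentiation argument, not just the easy direction of integrating a pointwise bound; this is routine but should be stated if the proof is written out in full. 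With that caveat addressed, your proof is complete and, in fact, more careful than the paper's treatment.
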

	According to the Laplacian mechanism, each gradient perturbation preserves $\epsilon/k$-differential privacy. Because the overall number of iterations is $k$, the sequential composition lemma ensures that the whole algorithm maintains $\epsilon$-differential privacy.
\end{proof}

\section{Proof of Theorem~\ref{out}}\label{proof4}

\begin{proof}
	The global sensitivity of the output $\widehat{\mathbf{M}}$ is $\Delta_1 = 2\alpha$. According to Laplace mechanism, this algorithm maintains $\epsilon$-differential privacy. Furthermore, denote the approximate underlying matrix before perturbation as $\widehat{\mathbf{M}}^{(0)}$. We have
	\begin{equation*}
	    \frac{1}{d_1d_2}\|\widehat{\mathbf{M}}-\mathbf{M}^\star\|_F^2 \leq  \frac{1}{d_1d_2}\|\widehat{\mathbf{M}}-\widehat{\mathbf{M}}^{(0)}\|_F^2 + \frac{1}{d_1d_2}\|\widehat{\mathbf{M}}^{(0)}-\mathbf{M}^\star\|_F^2.
	\end{equation*}
	The upper bound of $\|\widehat{\mathbf{M}}^{(0)}-\mathbf{M}^\star\|_F^2$ has been provided in~\cite[Theorem 1]{davenport20141}. Here, We focus only on the bound of $\|\widehat{\mathbf{M}}-\widehat{\mathbf{M}}^{(0)}\|_F^2$.
	Notice that
	\begin{equation*}
	    \|\widehat{\mathbf{M}}-\widehat{\mathbf{M}}^{(0)}\|_F^2 = \|\mathbf{H}\|_F^2 = \sum_{i,j} \mathbf{H}_{ij}^2.
	\end{equation*}
    Since $ \mathbf{H}_{ij}$ is drawn from Laplace distribution with mean $0$ and scale $\frac{2\alpha}{\epsilon}$,  we have that $\forall t>0$,
    \begin{align}
        \mathbb{P}\left( | \mathbf{H}_{ij}| \geq  \frac{2t\alpha}{\epsilon}\right) & = 2\mathbb{P}\left(  \mathbf{H}_{ij} \geq  \frac{2t\alpha}{\epsilon}\right) \nonumber \\
        &=2 \int_{\frac{2t\alpha}{\epsilon}}^{\infty} \frac{\epsilon}{4\alpha} \exp \left(-\frac{\epsilon| \mathbf{H}_{ij}|}{2\alpha}\right) \mathrm{d} \mathbf{H}_{ij} \nonumber \\
        & = e^{-t} \nonumber.
    \end{align}
    Thus,
    \begin{align}
        \mathbb{P}\left( \sum_{i,j} \mathbf{H}_{ij}^2 \geq  \frac{4d_1d_2t^2\alpha^2}{\epsilon^2}\right) & \leq  \mathbb{P}\left( \min_{i,j} \mathbf{H}_{ij}^2 \geq  \frac{4t^2\alpha^2}{\epsilon^2}\right) \nonumber \\
         &= 1 - \left[1-\mathbb{P}\left( | \mathbf{H}_{ij}| \geq  \frac{2t\alpha}{\epsilon}\right)\right]^{d_1d_2} \nonumber \\
         & = 1 - \left[1-e^{-t}\right]^{d_1d_2} \nonumber\\
        & \leq  \frac{d_1d_2}{e^{t}}. \label{talo}
    \end{align}
    Also, let $C_{\Omega,\mathbf{Y}} = \sqrt{n+(d_{1}+ d_{2}) \log (d_{1}d_{2})}$. Together with~\eqref{talo} and~\cite[Theorem 1]{davenport20141}, we have that with probability at least $1-C_5/ (d_1+d_2)$, 
    \begin{equation*}
        \begin{aligned}
        \frac{1}{d_1d_2}\|\widehat{\mathbf{M}}-\mathbf{M}^\star\|_F^2 &\leq  C_{f,\alpha}\sqrt{\frac{r\left(d_{1}+d_{2}\right)}{n}} \sqrt{\frac{C_{\Omega,\mathbf{Y}}}{n}}+ \frac{4\alpha^2\left[ \log(d_1d_2)+C_6\log(d_1+d_2) \right]^2}{\epsilon^2},
        \end{aligned}
	\end{equation*}
	where $C_{f,\alpha}=C_{6} \alpha L_{f,\alpha} \beta_{f,\alpha}$, and $C_5, C_6$ are constants.
\end{proof}

\end{document}